\documentclass[12pt]{article}
\usepackage{amsmath,amstext,amsfonts,amsthm}
\usepackage{amssymb,hyperref,verbatim}
\usepackage{marvosym,bbm}
\usepackage[T1]{fontenc}
\usepackage{stackengine}

\setlength{\topmargin}{-.75in} \setlength{\leftmargin}{0in}
\setlength{\oddsidemargin}{0in}

\setlength{\textwidth}{6.5in} \setlength{\textheight}{9in}

\newtheorem{thm}{Theorem}
\newtheorem{lemma}{Lemma}[section]
\newtheorem{prop}[lemma]{Proposition}
\newtheorem{rmk}[lemma]{Remark}
\newtheorem{cor}[lemma]{Corollary}
\newtheorem{defin}[lemma]{Definition}

\def\tr{\operatorname{tr}}
\numberwithin{equation}{section}

\begin{document}

\title{On the Wegner orbital model}

\author{
Ron Peled\textsuperscript{1}, Jeffrey Schenker\textsuperscript{2},
Mira Shamis\textsuperscript{3},  Sasha Sodin\textsuperscript{4}}
\footnotetext[1]{School of Mathematical Sciences, Tel Aviv
University, Tel Aviv, 69978, Israel. E-mail:
peledron@post.tau.ac.il. Supported in part by ISF grants 1048/11 and 861/15,  IRG
grant SPTRF and the European Research Council start-up grant 678520
(LocalOrder).} \footnotetext[2]{Department of Mathematics, Michigan
State University, East Lansing, MI 48824. \mbox{E-mail}:
jeffrey@math.msu.edu. Supported in part by The Fund For Math and NSF
grant DMS-1500386.} \footnotetext[3]{Department of Mathematics, The
Weizmann Institute of Science, Rehovot 7610001, Israel. E-mail:
mira.shamis@weizmann.ac.il. Supported in part by ISF grant 147/15.} 
\footnotetext[4]{School of Mathematical
Sciences, Tel Aviv University, Tel Aviv, 69978, Israel. E-mail:
sashas{\MVOne}@post.tau.ac.il. Supported in part by the European
Research Council start-up grant 639305 (SPECTRUM).} \maketitle

\begin{abstract}
 The Wegner orbital model is a class of random operators introduced by Wegner
 to model the motion of a quantum particle with many internal degrees of freedom (orbitals)
 in a disordered medium. We consider the case when the matrix potential is Gaussian, and prove
 three results:
 localisation at strong disorder, a
Wegner-type estimate on the mean  density  of eigenvalues,
 and a Minami-type estimate on the probability of having multiple eigenvalues in a short interval.
 The last two results are proved in the more general
setting of deformed block-Gaussian matrices, which includes a class of
Gaussian band matrices as a special case. Emphasis is placed on the dependence
of the bounds on the number of orbitals.  As an additional application, we improve the
upper bound on the localisation length for one-dimensional Gaussian band matrices.
\end{abstract}

\section{Statement of results}

The current investigation is motivated by the work of
Wegner \cite{W1} and its continuation by Sch\"afer and Wegner
\cite{SW} and Oppermann and Wegner \cite{OW}
on the motion of a quantum particle with many
($N \gg 1$) internal degrees of freedom in
a disordered medium.

The Hamiltonian $H$ of the quantum particle acts on a dense subset
of $\ell_2(\mathbb{Z}^d \to \mathbb{C}^N)$, the space of
square-integrable functions from $\mathbb{Z}^d$ to $\mathbb{C}^N$,
via
\begin{equation}\label{eq:Z_d_model}
 (H\psi)(x) = V(x) \psi(x) + \sum_{y\, : \, y
\sim x} W(x, y) \psi(y)~, \quad x \in \mathbb{Z}^d~,
\end{equation}
where the potential entries $V(x)$ are $N \times N$ Hermitian
matrices, and the hopping terms $W(x, y)$ are $N \times N$ matrices
with the Hermitian constraint
$W(y, x) = W(x, y)^*$. Following \cite{W1}, we take the
potential entries and hopping terms  random and assume them to be
independent up to the Hermitian constraint, meaning that
\[ \left\{ V(x) \, \mid \, x \in \mathbb{Z}^d \right\}\,\,\bigcup\,\,
\left\{ W(x, y) \, \mid \, x,y\in\mathbb{Z}^d, \text{$x$ has even
sum of coordinates and }x\sim y\right\}\] are jointly independent.
We assume that either the distribution of each $V(x)$ is given by
the GOE, and that the matrices $W(x, y)$ are real (orthogonal case),
or that the distribution of each $V(x)$ is given by the GUE (unitary
case). Here, the probability density of the GOE (Gaussian Orthogonal
Ensemble) with respect to the Lebesgue measure on real symmetric
matrices is proportional to $\exp\left\{ - \frac{N}{4} \mathrm{tr}\,
V^2\right\}$, and the probability density of the GUE (Gaussian
Unitary Ensemble) with respect to the Lebesgue measure on Hermitian
matrices is proportional to $\exp\left\{ - \frac{N}{2} \mathrm{tr}\,
V^2\right\}$.

 Special cases of the model \eqref{eq:Z_d_model}
include the block Anderson model and the Wegner orbital model in
their orthogonal and unitary versions, given by
\begin{equation}\label{eq:def}\begin{split}
\text{\stackanchor{block}{Anderson}} & \begin{cases}
(H^{\text{bA},\mathbb{R}} \psi)(x) =
V^\text{GOE}(x) \psi(x) + g \sum_{y \sim x} (\psi(x) - \psi(y))~, \\
(H^{\text{bA},\mathbb{C}} \psi)(x) = V^\text{GUE}(x) \psi(x) + g
\sum_{y \sim x} (\psi(x) - \psi(y))~,
\end{cases} \\
\text{\stackanchor{Wegner}{orbital}} &\begin{cases}
(H^{\text{Weg},\mathbb{R}} \psi)(x) =
V^\text{GOE}(x) \psi(x) + g \sum_{y \sim x} W^\mathbb{R}(x,y) \psi(y)~, \\
(H^{\text{Weg},\mathbb{C}} \psi)(x) = V^\text{GUE}(x) \psi(x) + g
\sum_{y \sim x} W^\mathbb{C}(x,y) \psi(y)~,
\end{cases}
\end{split}\end{equation}
where $W^\mathbb{R}(x, y)$ has independent real Gaussian
$N_\mathbb{R}(0,1/N)$ entries,  $W^\mathbb{C}(x, y)$ has independent
complex Gaussian $N_\mathbb{C}(0,1/N)$ entries, $g > 0$ is a coupling
constant, and superscripts
indicate the symmetry class of the potential matrices. The block
Anderson model is a generalisation of the Anderson model \cite{And}
with Gaussian disorder (which is recovered when $N = 1$), whereas
the Wegner orbital model is invariant in distribution under local
gauge transformations, i.e.\ conjugation by $\mathcal{U}$ of the
form
\[ (\mathcal{U}\psi)(x) = \mathcal{U}(x) \psi(x)~, \quad \text{where} \quad \mathcal{U}(x) \in
\begin{cases}O_N~, &H^{\text{Weg},\mathbb{R}}\\
U_N~, &H^{\text{Weg},\mathbb{C}}
\end{cases}, \quad x \in \mathbb{Z}^d~. \]

Our results pertain to three topics: localisation at strong disorder
in arbitrary dimension (Theorem~\ref{thm:loc}), estimates on the
density of states (Wegner estimates, Theorem~\ref{thm:weg}), and on
the probabilities of multiple eigenvalues in a short interval
(Minami estimates, Theorem~\ref{thm:min}). The latter two results
are proved in greater generality, for deformed block-Gaussian
matrices, and are also applicable to Gaussian band matrices. The
common theme is the strive for the sharp dependence on the number
$N$ of orbitals (internal degrees of freedom). As an additional
application, we improve the upper bound from \cite{Sch1} on the
localisation length of one-dimensional Gaussian band matrices
(Theorem 4).

\paragraph{Strong disorder localisation}
The Anderson model in dimension $d\geq 3$ is conjectured to exhibit a spectral phase transition between a localisation (insulator) regime and a delocalisation (conductor) regime.
In particular, there should exist a threshold $g_0(d)$ such
that for $g < g_0(d)$ the spectrum is pure point, whereas
for $g > g_0(d)$ it has an absolutely continuous component.
So far only the localisation side of the transition has been established mathematically.
Two methods of proof are now available: the multi-scale analysis of Fr\"ohlich and Spencer \cite{FS} and the fractional moment method of Aizenman and Molchanov \cite{AM}.

A  phase transition similar to that of the Anderson model is
conjectured to occur, in dimension $d \geq 3$, for the orbital
models \eqref{eq:def}, with the threshold $g_0(d, N)$ depending on
the dimension and the number of orbitals. The first subject of the
current paper is the dependence of the threshold $g_0(d, N)$ on the
number of orbitals $N$. On the physical level of rigour, this
question was settled already in the original papers \cite{W1,SW,OW}.
The arguments provided there indicate that, for $d \geq 3$,
\begin{equation}\label{eq:conj}
g_0(d, N) \sim \left\{ C(d) \sqrt{N} \right\}^{-1}\quad\text{as
$N\to\infty$~.}
\end{equation}
Two heuristic arguments are discussed in  Section~\ref{s:rmks}.

\smallskip
Our first result, Theorem~\ref{thm:loc} below, is a mathematically
rigorous confirmation to one direction of  (\ref{eq:conj}), the
localisation side. The result is stated for the general model
\eqref{eq:Z_d_model}. When specialised to  the models
(\ref{eq:def}), it asserts that, for $g$ below the threshold
(\ref{eq:conj}), the matrix elements of the resolvent decay
exponentially in the distance from the diagonal. The formal
statement is in terms of finite-volume restrictions: denote by
\[ P_\Lambda: \ell_2(\mathbb{Z}^d \to \mathbb{C}^N) \to \ell_2(\Lambda \to \mathbb{C}^N)\]
 the coordinate projection to a finite volume
$\Lambda \subset \mathbb{Z}^d$ and, for an operator $H$ of the form
\eqref{eq:Z_d_model}, let
\begin{equation}\label{eq:H_Lambda_def}
H_\Lambda := P_\Lambda H P_\Lambda^*
\end{equation}
be the restriction of $H$ to $\Lambda$. Let $\|x - y\|_1$ be the
graph distance between  $x, y \in \mathbb{Z}^d$, let $\|v\|$ be the
$\ell_2$ norm of a vector $v\in\mathbb{C}^N$ and let
$\|W\|_{\operatorname{op}}$ be the operator norm of a matrix $W$.
\begin{thm}\label{thm:loc}
There exists a constant $C
> 0$ such that the following holds. Let $0 < s < 1$, let $H$ be as in \eqref{eq:Z_d_model} in either
the orthogonal case or the unitary case and suppose that
\begin{equation}
\label{eq:belthr} g_{\operatorname{eff}}:= \sup_{x,y}
\left\{ \mathbb{E}\|W(x,y)\|_{\operatorname{op}}^s\right\}^{\frac{1}{s}}< \left\{ \frac{1-s}{Cd}
\right\}^{\frac{1}{s}} \frac{1}{\sqrt{N}}~.
\end{equation}
Then for any finite $\Lambda \subset \mathbb{Z}^d$, $x, y \in
\Lambda$, $\lambda \in \mathbb{R}$, and $v \in \mathbb{C}^N$:
\begin{equation}\label{eq:expdecay}
\mathbb{E} \| (H_\Lambda- \lambda)^{-1} (x, y) \,  v \|^s \leq
\frac{CN^{s/2}}{1-s}  \left(\frac{Cd ({g_{\operatorname{eff}}} \sqrt{N})^s}{1-s}
\right)^{\| x -y \|_1}\|v\|^s~.\end{equation}
\end{thm}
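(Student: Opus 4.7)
The plan is to follow the fractional moment (Aizenman--Molchanov) method, extended to the matrix-valued (block) setting. Two ingredients are central: a decoupling bound quantifying the regularity of GOE and GUE, and a block Schur--complement expansion of the resolvent.

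First, I would prove the a priori bound
$$ \mathbb{E}\bigl\| (V + A)^{-1} u \bigr\|^s \;\leq\; \frac{C\, N^{s/2}}{1-s}\,\|u\|^s, $$
valid for any Hermitian matrix $A$ (real symmetric in the orthogonal case) and any $u \in \mathbb{C}^N$, with $V$ distributed as GOE or GUE with the paper's normalisation. By orthogonal/unitary invariance of the ensemble, I would reduce to $u = \|u\|\,e_1$. A block Schur--complement expansion along the first row/column then rewrites $\|(V + A)^{-1} e_1\|^2 = (1 + \|D^{-1}\beta\|^2)/|\gamma|^2$, where $\gamma = V_{11} - c(D, \beta)$ with $V_{11}$ an independent real Gaussian of variance $\sim 1/N$. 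The $N^{s/2}$ factor arises from integrating out this scalar Gaussian; the correlation between $\gamma$ and $\|D^{-1}\beta\|$ must be exploited to avoid a spurious extra power of $N$.

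Next, I would unroll the resolvent via a Schur complement at a site $x \in \Lambda$: separating the $N \times N$ block at $x$ gives, for $y \neq x$,
$$ (H_\Lambda - \lambda)^{-1}(x, y) = -(V(x) + \Sigma_x)^{-1} \sum_{x' \sim x,\, x' \in \Lambda} W(x, x')\, (H_{\Lambda \setminus \{x\}} - \lambda)^{-1}(x', y), $$
with a Hermitian self-energy $\Sigma_x$ assembled from $W(x, \cdot)$ and $(H_{\Lambda \setminus \{x\}} - \lambda)^{-1}$. Crucially, $V(x)$ is independent of $\Sigma_x$ and each $W(x, x')$ is independent of $(H_{\Lambda \setminus \{x\}} - \lambda)^{-1}$.

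Set $F_\Lambda(x, y) := \mathbb{E}\|(H_\Lambda - \lambda)^{-1}(x, y)\, v\|^s$. For $x \neq y$, conditioning on all randomness other than $V(x)$ and applying the a priori bound to $(V(x) + \Sigma_x)^{-1}$ acting on the vector $\sum_{x'} W(x, x')(H_{\Lambda\setminus\{x\}}-\lambda)^{-1}(x', y)\, v$, then using subadditivity of $t \mapsto t^s$ (valid for $0 < s < 1$), the submultiplicativity $\|Mu\|^s \leq \|M\|_{\operatorname{op}}^s \|u\|^s$, and the two independences noted above, produce the recursion
$$ F_\Lambda(x, y) \;\leq\; \frac{C\, N^{s/2}\, g_{\operatorname{eff}}^s}{1-s} \sum_{x' \sim x,\, x' \in \Lambda} F_{\Lambda \setminus \{x\}}(x', y). $$
Iterating this along self-avoiding nearest-neighbour walks from $x$ to $y$, terminating via the a priori bound $F_{\Lambda'}(y, y) \leq CN^{s/2}/(1-s)\,\|v\|^s$, bounding the number of walks of length $\ell$ by $(2d)^\ell$, and summing the resulting geometric series (whose convergence is guaranteed by \eqref{eq:belthr} with a suitable choice of $C$) yields \eqref{eq:expdecay}.

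The technical heart is the a priori bound with its sharp $N^{s/2}$ dependence: a naive estimate based on $\|(V + A)^{-1}\|_{\operatorname{op}}^s$ gives a factor of order $N^s$ (since eigenvalue spacings are $\sim 1/N$), which upon iteration would shift the threshold from $1/\sqrt{N}$ to $1/N$ and destroy the scaling predicted in \eqref{eq:conj}. The gain of $N^{s/2}$ reflects the fact that a deterministic vector $u$ typically has overlap only $\sim 1/\sqrt{N}$ with the eigenvector of $V + A$ responsible for the worst resolvent blow-up; the Schur--complement rewriting makes this cancellation explicit. Once this bound is in hand, the propagation through the lattice and the bookkeeping of constants are standard.
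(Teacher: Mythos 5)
Your proposal is correct and follows essentially the same route as the paper: your a priori bound is exactly Proposition~\ref{prop:vec} (which the paper quotes from \cite{APSSS} rather than reproving), and your one-step Schur-complement recursion, iterated, reproduces the paper's self-avoiding-walk expansion \eqref{eq:rwe} together with the same conditioning on $\mathcal{F}_{\tilde\Lambda,\tilde x}$, the same independence of $W(x,\cdot)$ from $H_{\Lambda\setminus\{x\}}$, and the same $(2d)^k$ path count and geometric series. The only place your sketch is thinner than a complete argument is the proof of the single-block estimate itself (the correlation between $\gamma$ and $\|D^{-1}\beta\|$), but since the paper imports that estimate wholesale, this does not distinguish your argument from theirs.
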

In the left-hand side of (\ref{eq:expdecay}), we first take the
matrix inverse, then extract an $N \times N$ block, and then
multiply by a vector, or formally:
\[ (H_\Lambda- \lambda)^{-1} (x, y)\, v = P_{\{x\}} (H_\Lambda- \lambda)^{-1} P_{\{y\}}^* v~. \]
Also observe that the assumption (\ref{eq:belthr}) guarantees that
$\frac{Cd ({g_{\operatorname{eff}}}\sqrt{N})^s}{1-s}<1$, and thus the right-hand side
of (\ref{eq:expdecay}) indeed decays exponentially in $\|x-y\|_1$.

\medskip
\begin{rmk}\label{rmk:loc} Theorem~\ref{thm:loc} applies to the models (\ref{eq:def}) and yields the conclusion
(\ref{eq:expdecay}), where $g_{\operatorname{eff}}$ is replaced with
$g$. For $s=1 - \log^{-1}(d+2)$, the assumption (\ref{eq:belthr}) is
implied by
\begin{equation}\label{eq:thr} g < \left\{C d \log (d + 2) \sqrt{N}\right\}^{-1} \end{equation}
(where the constant may differ from that of (\ref{eq:belthr})).
\end{rmk}

\begin{rmk} Methods have been developed to pass from decay estimates
for the resolvent in finite volume to other signatures of Anderson
localisation, in particular, pure point spectrum and dynamical
localisation in infinite volume. We refer in particular to the
eigenfunction correlator method introduced by Aizenman \cite{A}; see
further \cite[Theorem~A.1]{ASFH}. Such methods can also be applied
in our setup. \end{rmk}

\medskip
The main feature of Theorem~\ref{thm:loc} is the dependence on the
number of orbitals, $N$, which, for the models \eqref{eq:def}, is
conjecturally sharp in dimension $d \ge 3$. For comparison,
localisation for
\[ g < \left\{ C(d) N^{3/2} \right\}^{-1}~,\]
follows from the  general theorems pertaining to variants of  the Anderson model (proved either by the method of \cite{FS}, or of \cite{AM}) and does not require the additional arguments of the current paper.

The asymptotics for growing $d$ in (\ref{eq:thr}) is the same as in
the corresponding result for the usual Anderson model; moreover, a
heuristic analysis of resonances suggests that $d \log d$ is the
true order of growth of the threshold (cf.\ Abou-Chacra, Anderson,
and Thouless \cite[(6.17)--(6.18)]{AAT} and a recent rigorous result
of Bapst \cite{Bapst} pertaining to the Anderson model on a tree).
Also, the arguments of \cite{Sch2} can be applied in the current
context, to express the constant $C$ in terms of the connectivity
constant of
 self-avoiding walk on $\mathbb{Z}^d$.

\paragraph{Wegner estimates}
Next we discuss Wegner estimates for a class of models, which
contains, in particular, the models (\ref{eq:Z_d_model}) and certain
Gaussian band matrices. For a given Hermitian matrix $H$ and an
interval $I\subset\mathbb{R}$ denote
\[
\mathcal{N} (H, I) = \#\{ \text{eigenvalues\, of\,  $H$\,  in\,  $I$}\}~.
\]
Also denote by $|I|$ the length of $I$.

Estimates on the density of states (cf.\ (\ref{eq:defdos}) and the
subsequent remark below) were first obtained, in the context of
Schr\"odin\-ger operators, by Wegner \cite{W3}, who proved the
following:

\smallskip
{\em Let $H_0$ be a  $k \times k$ Hermitian matrix,  and let $H =
H_0 + V$, where $V$ is a random diagonal matrix with entries
independently sampled from a bounded probability density $p$ on
$\mathbb{R}$. Then
\begin{equation}\label{eq:fweg}
\mathbb{E} \mathcal{N}(H, I) \leq \|p\|_\infty \, k  \, |I|~, \quad
\text{$I$ is an interval in $\mathbb{R}$}~.
\end{equation} }
The original motivation of Wegner was to rule
out the divergence of the density of states at the
mobility edge. Since then, estimates on the mean number
of eigenvalues in an interval, commonly referred to as Wegner estimates,
have found numerous applications in the mathematical
study of random operators (where they allow
to handle resonances).

The original estimate (\ref{eq:fweg}) can be applied to the
finite-volume restrictions of the models (\ref{eq:def}), where it
provides the sharp dependence on the volume and on the size of the
interval, but not on the number of orbitals. We prove a form of
(\ref{eq:fweg})
 tailored to the models
(\ref{eq:def}), with the sharp dependence on $N$. We formulate the
result in a more general form, which applies also to Gaussian band
matrices.

For positive integers $k, N_1, \ldots, N_k$, we consider a random square matrix of dimension $\sum_{j=1}^k N_j$ which has the form
\begin{equation}\label{eq:block_Gaussian_model}
  H = H_0 + \bigoplus_{j=1}^k V(j) = H_0 + \left( \begin{array}{c|c|c|c|c}
V(1) & 0 & 0 &  \cdots & 0\\\hline
0 & V(2) & 0 & \cdots & 0\\\hline
0 & 0 & V(3) & \cdots & 0\\\hline
&&&&\\\hline
0 & 0 & 0 &  \cdots & V(k)
\end{array}\right)
\end{equation}
in which $H_0$ is deterministic and the matrices $(V(j))$ are random and independent, with $V(j)$ of size $N_j\times N_j$. We
assume that either the distribution of each $V(j)$ is given by the GOE,
and that $H_0$ is real symmetric (orthogonal case), or that
the distribution of each $V(j)$ is given by the GUE and $H_0$ is Hermitian (unitary case). We refer to
 matrices thus defined as deformed block-Gaussian matrices.

\begin{thm}\label{thm:weg} There exists a constant $C>0$ such that the following holds. Let $H$ be a deformed block-Gaussian matrix as in \eqref{eq:block_Gaussian_model}, in either the orthogonal case or the unitary case. Then,
for any interval $I \subset \mathbb{R}$,
\begin{equation}
\mathbb{E} \mathcal{N}(H, I)  \leq C \sum_{j=1}^k N_j \, |I|~.
\end{equation}
\end{thm}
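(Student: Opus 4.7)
The plan is to reduce the theorem, via a standard Stieltjes-transform bound, to controlling the expected trace of the resolvent of $H$, and then, through a block Schur complement, to a per-block Wegner-type estimate for a single shifted GOE/GUE block. For the midpoint $\lambda_0$ of $I$ and $\eta=|I|/2$, set $z=\lambda_0+i\eta$. The identity
\[
\Im\tr(H-z)^{-1} \;=\; \sum_{k}\frac{\eta}{(\lambda_k(H)-\lambda_0)^2+\eta^2} \;\geq\; \frac{1}{2\eta}\,\mathcal{N}(H,I)
\]
gives $\mathcal{N}(H,I)\leq |I|\,\Im\tr(H-z)^{-1}$. Hence it suffices to prove the uniform resolvent bound
\[
\mathbb{E}\,\Im\tr(H-z)^{-1} \;\leq\; C\sum_{j=1}^{k} N_j, \qquad \Im z>0.
\]

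Next I split $\tr(H-z)^{-1}=\sum_{j=1}^k \tr\bigl[P_j(H-z)^{-1}P_j^*\bigr]$, where $P_j$ is the coordinate projection onto the $j$-th block and $P_{\hat j} = I - P_j$. The block Schur complement yields
\[
P_j(H-z)^{-1}P_j^* \;=\; \bigl(V(j)-M_j(z)\bigr)^{-1},
\]
with
\[
M_j(z) \;=\; zI_{N_j} - P_jH_0P_j^* + P_jH_0P_{\hat j}^*\,(H^{(j)}-z)^{-1}\,P_{\hat j}H_0P_j^*,
\]
where $H^{(j)}=P_{\hat j} H P_{\hat j}^*$ is the restriction of $H$ to the complement of block $j$. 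Two properties are central: $M_j(z)$ is measurable with respect to $\{V(l)\}_{l\neq j}$ alone, so it is independent of $V(j)$; and $\Im M_j(z)\succeq (\Im z)\,I_{N_j}$, which follows from $\Im(H^{(j)}-z)^{-1}\succeq 0$. Conditioning on $\{V(l)\}_{l\neq j}$ and summing over $j$, the theorem is thereby reduced to the following per-block deterministic-shift estimate: for a GOE or GUE matrix $V$ of size $N$ and any $N\times N$ complex matrix $M$ with $\Im M\succ 0$,
\[
\mathbb{E}\,\Im\tr(V-M)^{-1} \;\leq\; C\,N.
\]

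This reduced bound is the main technical step and the \emph{chief obstacle} of the proof. I would establish it by first invoking the orthogonal/unitary invariance of the law of $V$ to conjugate $M$ so that $\Im M$ is diagonal (which leaves $V$ equal in distribution to itself), and then performing Gaussian integration by parts on the entries of $V$. Setting $R=(V-M)^{-1}$, the identity $VR=I+MR$ combined with $\partial_{V_{ab}}R=-R E_{ab}R$ and the Gaussian covariance structure of the entries of $V$ yields a closed matrix equation for $\mathbb{E}R$ — a non-Hermitian analogue of the vector Dyson equation governing the resolvent of a deformed Wigner ensemble. The hypothesis $\Im M\succ 0$ is precisely what keeps the solution of this equation bounded, and combined with the well-known bound on the expected Stieltjes transform of the undeformed GOE/GUE (which is $O(1)$ in the $N$-normalized trace) it delivers $\mathbb{E}\,\Im\tr(V-M)^{-1}\leq CN$ uniformly in $M$. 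An alternative route uses the spectral decomposition $V=U\Lambda U^*$ with $U$ Haar and $\Lambda$ independent, together with the fact that the one-point correlation function of GOE/GUE eigenvalues is bounded by an absolute constant; in either case the technical heart of the argument is the passage from the standard bound on a Hermitian resolvent at $z=\lambda+i\eta$ to the analogous bound with a genuinely non-Hermitian shift $M$.
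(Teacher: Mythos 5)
Your reduction is sound as far as it goes: the bound $\mathcal{N}(H,I)\le |I|\,\Im\tr(H-z)^{-1}$ with $z$ the centre of $I$ and $\Im z=|I|/2$, the block Schur complement $P_j(H-z)^{-1}P_j^*=(V(j)-M_j(z))^{-1}$, the independence of $M_j(z)$ from $V(j)$, and $\Im M_j(z)\succeq (\Im z)I$ are all correct, and they do reduce the theorem to the per-block claim
\begin{equation*}
\mathbb{E}\,\Im\tr\,(V-M)^{-1}\;\le\;CN \qquad\text{uniformly over }M\text{ with }\Im M\succ 0 .
\end{equation*}
The genuine gap is that this per-block claim is not a technical afterthought but contains the entire difficulty of the theorem, and neither of your sketched arguments establishes it. Already the special case $M=A+i\eta$ with $A$ Hermitian and $\eta\downarrow 0$ is equivalent to the Wegner estimate $\mathbb{E}\,\mathcal{N}(V+A,I)\le CN|I|$ for GOE/GUE deformed by an \emph{arbitrary} Hermitian matrix, uniformly in $A$ and in arbitrarily short intervals; this is exactly the single-block input the paper imports from \cite{APSSS} (Proposition~\ref{prop:vec}, inequality \eqref{eq:dos}), and it is a substantial theorem in its own right (its unitary case was also proved by Pchelin). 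Your integration-by-parts route does not deliver it: the identity $VR=I+MR$ together with Gaussian integration by parts does \emph{not} yield a closed equation for $\mathbb{E}R$ --- the self-energy term involves $\mathbb{E}[(\tr R)R]$, so fluctuations of $\tr R$ enter, and the self-consistent (matrix Dyson equation / local law) machinery for controlling them degenerates precisely in the regime where the smallest eigenvalue of $\Im M$ tends to $0$, which is the regime relevant for a Wegner bound on arbitrarily short intervals. Your alternative route via $V=U\Lambda U^*$ and boundedness of the GOE/GUE one-point function only controls scalar shifts $M=\lambda+i\eta$; for a genuine Hermitian deformation $\Re M=A$ the Haar eigenvector average does not decouple from $A$ (for GUE one can bring in HCIZ/determinantal structure, but that is a real additional argument, and no such structure is available in the GOE case). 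So the ``chief obstacle'' you flag is left unproved, and the proposed tools would not close it.

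For comparison, the paper splits this obstacle into two steps: a short Poisson-integral (harmonicity) lemma, $\Im\tr(X+iY-i\eta)^{-1}=\int\frac{dt}{\pi(1+t^2)}\Im\tr(X+tY-i\eta)^{-1}$ for $Y\preceq 0$, which converts the non-Hermitian Schur-complement shift into an average of \emph{Hermitian} shifts; and the cited single-block bound of \cite{APSSS} for $V+A$ with Hermitian $A$, applied after a Perron--Stieltjes/integration-by-parts rewriting (Proposition~\ref{l:formula}). Note that the passage from Hermitian to non-Hermitian shifts is the soft part (it cannot be done by a naive monotonicity in $\Im M$, which is false, but the Poisson formula handles it in a few lines), whereas the Hermitian-deformation estimate uniform in $A$ and $\eta$ is the hard part --- the opposite of the weighting in your sketch. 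If you are permitted to quote the single-block result of \cite{APSSS}, your skeleton can be completed by inserting such a Poisson/harmonicity step in place of the Dyson-equation argument; without that input, the key estimate remains unproven.
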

\noindent The unitary case of Theorem~\ref{thm:weg} was also
recently proved by Pchelin \cite{Pch}. Pchelin relies on a
single-block estimate ($k=1$, cf.\ Proposition~\ref{prop:vec} and
(\ref{eq:dos}) below) which he proves in the unitary case. Possibly,
his argument can rely additionally on the orthogonal case of the
single-block estimate, as proved in \cite{APSSS}, and yield an
alternative proof of Theorem~\ref{thm:weg} in full generality.

Our proofs of Theorem~\ref{thm:weg} and Theorem~\ref{thm:min} below
rely on a representation formula for $\mathcal{N}(H,I)$ in terms of
similar quantities for single blocks. This formula, presented in
Proposition~\ref{l:formula} (see also Remark~\ref{rmk:normal}), may
possibly be of use elsewhere.

One can obtain complementary bounds to Theorem~\ref{thm:weg}; see
Section~\ref{s:rmks}.

\medskip\noindent{\em Application \# 1: orbital model.}
Going back to the orbital operators (\ref{eq:Z_d_model}), the
theorem applies to the restriction of each of them to a finite
volume. For integers $L\ge 0$ and $d\ge 1$ we write
\begin{equation*}
  \Lambda_{L}^d := \{-L,-L+1,\ldots, L\}^d.
\end{equation*}

\begin{cor}\label{cor:orbital_Wegner} There exists a constant $C>0$ such that the following holds. Let $H$ be as in \eqref{eq:Z_d_model}
in either the orthogonal case or the unitary case, and let
$H_\Lambda$ be the restriction of $H$ to a finite volume
$\Lambda\subset\mathbb{Z}^d$ as in \eqref{eq:H_Lambda_def}. Then
\[ \mathbb{E} \mathcal{N}(H_\Lambda, I) \leq C N |\Lambda| |I| \]
for any interval $I \subset \mathbb{R}$. In particular, if the
limiting measure
\begin{equation}\label{eq:defdos} \rho_H(\cdot) = \lim_{L \to \infty}  \frac{\mathbb{E} \, \mathcal{N}(H_{\Lambda_{L}^d}, \cdot)}{N(2L+1)^d}
\end{equation}
exists, then it has a density (called the density of states of $H$)
which is bounded uniformly in $N$.
\end{cor}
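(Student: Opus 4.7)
The plan is to realise $H_\Lambda$ as a deformed block-Gaussian matrix in the sense of \eqref{eq:block_Gaussian_model} and then apply Theorem~\ref{thm:weg} after conditioning on the hopping terms. First I decompose
\[ H_\Lambda \;=\; H_0^{\omega} \;+\; \bigoplus_{x \in \Lambda} V(x), \]
where $H_0^{\omega}$ collects the off-diagonal blocks $P_{\{x\}} H_\Lambda P_{\{y\}}^{*} = W(x,y)$ for $x,y \in \Lambda$ with $x \sim y$, $x \neq y$. The Hermitian constraint $W(y,x) = W(x,y)^{*}$ ensures $H_0^{\omega}$ is Hermitian, and in the orthogonal case the entries $W^{\mathbb{R}}(x,y)$ are real, so $H_0^{\omega}$ is real symmetric. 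By the independence assumption in the definition of the model, the matrices $\{V(x)\}_{x\in\Lambda}$ are jointly independent GOE (resp.\ GUE) matrices of size $N$, and they are independent of the family $\{W(x,y)\}$.

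Next I condition on $\{W(x,y)\}$. Conditionally, $H_\Lambda$ has exactly the structure of \eqref{eq:block_Gaussian_model} with $k=|\Lambda|$, $N_j = N$ for every $j$, and deterministic $H_0 = H_0^{\omega}$ of the required symmetry class. Applying Theorem~\ref{thm:weg} conditionally yields
\begin{equation*}
\mathbb{E}\bigl[\mathcal{N}(H_\Lambda, I) \,\bigm|\, \{W(x,y)\}\bigr] \;\leq\; C \sum_{x\in\Lambda} N \cdot |I| \;=\; C N |\Lambda| |I|,
\end{equation*}
with the universal constant $C$ from Theorem~\ref{thm:weg} (crucially independent of $H_0$ and of the realisation of $W$). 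Taking the outer expectation via the tower property gives the first assertion of the corollary.

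For the density-of-states statement, specialising to $\Lambda = \Lambda_L^d$ and dividing by $N(2L+1)^d$ yields
\[ \frac{\mathbb{E}\,\mathcal{N}(H_{\Lambda_L^d}, I)}{N (2L+1)^d} \;\leq\; C|I| \]
for every interval $I$ and every $L$. Whenever the limit in \eqref{eq:defdos} exists, passing $L \to \infty$ gives $\rho_H(I) \leq C|I|$ for every interval $I$, so $\rho_H$ is absolutely continuous with Radon--Nikodym derivative bounded by $C$, a constant independent of $N$. There is no real obstacle once the conditioning step is identified; the only minor point requiring verification is that $H_0^{\omega}$ sits in the correct symmetry class, which is precisely what the Hermitian constraint on the $W(x,y)$ (together with their reality in the orthogonal case) provides.
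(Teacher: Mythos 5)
Your proposal is correct and is essentially identical to the paper's proof: condition on the hopping terms $\{W(x,y)\}$, view $H_\Lambda$ as a deformed block-Gaussian matrix with $k=|\Lambda|$ and $N_j=N$, and apply Theorem~\ref{thm:weg} conditionally before averaging. The deduction of the uniform bound on the density of states from the finite-volume estimate is the intended routine step.
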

We remark that according to general results pertaining to metrically
transitive [= ergodic] operators, see Pastur and Figotin \cite{FP}
or Aizenman and Warzel \cite{AW}, the limiting measure in
\eqref{eq:defdos} exists for the models \eqref{eq:def} and, more
generally, whenever the distribution of the hopping terms $W(x,y)$
depends only on $x-y$.

\begin{proof}[Proof of Corollary~\ref{cor:orbital_Wegner}] Condition on the hopping terms $W(x,y)$ and apply Theorem~\ref{thm:weg} with $k=|\Lambda|$ and all $N_j$ equal to $N$. \end{proof}

\medskip\noindent
Let us briefly discuss related previous results. Constantinescu,
Felder, Gaw{\k{e}}dzki, and Kupiainen \cite{CFGK} derived an
integral representation of the density of states for a class of
locally gauge-invariant operators including $H^\text{Weg}$. Using
this representation, they proved, for a specific model slightly
outside the class (\ref{eq:Z_d_model}), that the density of states
is analytic, uniformly in $N$, in a certain range of parameters. In
the case of $d=1$, further results pertaining to the density of
states were obtained by Constantinescu \cite{C} using supersymmetric
transfer matrices.

The integrated density of states (the cumulative distribution
function of the measure $\rho_H$ from (\ref{eq:defdos})) was studied
by Khorunzhiy and Pastur \cite{KhP}, who established, for a wide
class of orbital models, an asymptotic expansion in inverse powers
of $N$; see further Pastur \cite{P2} and the book \cite[\S
17.3]{PShch} of Pastur and Shcherbina.

\medskip \noindent
{\em Application \# 2: Gaussian band
matrices.} We proceed to define a class of Gaussian random matrices
to which the results can be applied, and which contains the class of Gaussian
band matrices.
We say
that a random variable is complex Gaussian if its real and imaginary
parts are independent real Gaussian random variables with equal
variance.
\begin{defin}\label{defin:band} Let $L\ge 0$, $d\ge 1$ be integers and let $\psi:\mathbb{Z}^d \to [0,\infty)$
satisfy $\psi(-r) = \psi(r)$. A Gaussian random
matrix $H_L$ with domain $\Lambda_{L}^d$ and shape function $\psi$
is an Hermitian $(2L+1)^d \times (2L+1)^d$ random matrix, whose rows
and columns are indexed by the elements of $\Lambda_{L}^d$, having
the form
\begin{equation*}
  H_L = \frac{X_L + X_L^*}{\sqrt{2}},
\end{equation*}
where the entries of the matrix $X_L$ are either independent real
Gaussian (orthogonal case) or independent complex Gaussian (unitary
case), having zero mean and satisfying
\begin{equation*}
  \mathbb{E}|X_L(x,y)|^2 = \psi(x-y),\quad x,y\in\Lambda_{L}^d.
\end{equation*}
\end{defin}
We remark that an equivalent way to specify the covariance structure
of $H_L$ in the above definition is via the formula
\begin{equation}\label{eq:defband}
\mathbb{E} H_L(x, y) \overline{H_L(x', y')}  = \psi(x-y) \times
\begin{cases}
 \mathbbm{1}_{x=x', y=y'} + \mathbbm{1}_{x=y', y=x'}~,
&\text{orthogonal case} \\
 \mathbbm{1}_{x=x', y=y'}~,
&\text{unitary case}~.\end{cases}
\end{equation}
\begin{rmk}\label{rmk:GOE_GUE_distribution}
  We note for later use that, in our normalization, an $N\times N$
  random GOE (GUE) matrix has the same distribution as $\frac{X +
  X^*}{\sqrt{2N}}$ where the entries of the matrix $X$ are independent real (complex)
Gaussian with zero mean and with $\mathbb{E}|X(x,y)|^2=1$ for all $x,y$.
\end{rmk}

Theorem~\ref{thm:weg} implies a Wegner estimate for the Gaussian random matrices
thus defined. We write $\|v\|_\infty$ for the
$\ell^\infty$ norm of a vector $v$.
\begin{cor}\label{cor:metrically_transitive_Wegner}
There exists a constant $C>0$ such that the following holds. Let
$H_L$ be a Gaussian random matrix with domain
$\Lambda_{L}^d$ and shape function $\psi$ in either the orthogonal
case or the unitary case. Suppose that
\begin{equation}\label{eq:metrically_transitive_boundedness}
  \psi(r)\ge \frac{1}{(W+1)^d}\quad\text{when}\quad \|r\|_\infty\le 2
  \min(W, L)
\end{equation}
for some integer $W$ satisfying $0\le W\le 2L$. Then for any
interval $I \subset \mathbb{R}$,
\begin{equation}\label{eq:metrically_transitive_Wegner}
 \mathbb{E} \mathcal{N}(H_L, I)  \leq C(2L+1)^d |I|~.
\end{equation}
In particular, assuming \eqref{eq:metrically_transitive_boundedness}
holds for some integer $W\ge 0$, the measure
\[
\rho(\cdot) = \lim_{L\to
\infty}\frac{\mathbb{E}\mathcal{N}(H_L,\cdot)}{(2L+1)^d}
\]
has a density, the density of states, which is uniformly bounded by
$C$.
\end{cor}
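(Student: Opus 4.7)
The plan is to exhibit $H_L$ as a deformed block-Gaussian matrix and then apply Theorem~\ref{thm:weg}. More precisely, I would write $H_L = R + V$ with $V$ a direct sum of independent GOE (respectively GUE) matrices indexed by a suitable partition of $\Lambda_L^d$, and $R$ an independent Gaussian remainder; the hypothesis \eqref{eq:metrically_transitive_boundedness} is exactly what makes this splitting well-defined.

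Set $W' := \min(W,L)$. The first step is to partition $\Lambda_L^d = B_1 \sqcup \cdots \sqcup B_k$ into boxes $B_j$ with $\ell_\infty$-diameter at most $2W'$ and with $N_j := |B_j| \ge (W+1)^d$ for every $j$; note $\sum_j N_j = (2L+1)^d$. Such a partition is obtained as a Cartesian product of one-dimensional partitions of $\{-L,\ldots,L\}$ into consecutive intervals of lengths in $[W+1,\,2W'+1]$; this range is non-empty (since $W \le 2W'$ by the hypothesis $W \le 2L$) and its elements can be summed to exactly $2L+1$ by elementary arithmetic. When $W > L$ one simply takes the single block $B_1 = \Lambda_L^d$, whose size $(2L+1)^d$ is $\ge (W+1)^d$ by the same inequality.

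Next, let $V(j)$, $j = 1, \ldots, k$, be independent $N_j \times N_j$ matrices distributed as a GOE (orthogonal case) or as a GUE (unitary case) in the paper's normalization, and set $V = \bigoplus_j V(j)$, regarded as acting on $\ell_2(\Lambda_L^d \to \mathbb{C})$ in the obvious way. In either symmetry class, the entries of $H_L$ and of $V$ are independent up to the Hermitian constraint, so the covariance of $H_L - V$ is diagonal in the basis of independent entries; within a block this variance has the form $\psi(x-y) - 1/N_j$ (with a harmless extra factor of $2$ on orthogonal diagonal entries), which is non-negative because $\|x-y\|_\infty \le 2W'$ combined with \eqref{eq:metrically_transitive_boundedness} and $N_j \ge (W+1)^d$ yields $\psi(x-y) \ge 1/(W+1)^d \ge 1/N_j$; across blocks the entry of $V$ vanishes and that of $H_L - V$ simply equals $H_L(x,y)$. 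Hence on an enlarged probability space one may realize $H_L = R + V$ with $R$ a Gaussian Hermitian matrix independent of $V$, and conditioning on $R$ and applying Theorem~\ref{thm:weg} with $H_0 = R$ yields
\[
\mathbb{E}\bigl[\mathcal{N}(H_L, I) \mid R\bigr] \le C \sum_{j=1}^k N_j \, |I| = C(2L+1)^d |I|.
\]
Taking expectation in $R$ gives \eqref{eq:metrically_transitive_Wegner}. The density of states claim is then immediate: the prelimit measure $(2L+1)^{-d}\,\mathbb{E}\mathcal{N}(H_L, \cdot)$ is absolutely continuous with Lebesgue density uniformly bounded by $C$, a property preserved by the weak limit $\rho$.

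The only genuinely nontrivial step is constructing the partition so that every block is simultaneously small enough in diameter (so that $\psi \ge 1/(W+1)^d$ holds within the block) and large enough in cardinality (so that a GOE or GUE of that size has entry variance at most $\psi$). Once that partition is in place, the Gaussian decomposition works automatically, since independence of entries up to the Hermitian constraint reduces positive-semidefiniteness of the remainder covariance to a pointwise inequality between variances.
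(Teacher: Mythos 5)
Your proposal is correct and follows essentially the same route as the paper: partition $\Lambda_L^d$ into boxes built from one-dimensional intervals of length between $W+1$ and roughly $2W+1$, use the hypothesis \eqref{eq:metrically_transitive_boundedness} together with the variance comparison of Remark~\ref{rmk:GOE_GUE_distribution} to split off an independent block-diagonal GOE/GUE part, and then condition on the Gaussian remainder and apply Theorem~\ref{thm:weg} with $\sum_j N_j = (2L+1)^d$. Your explicit verification that each block has $\ell_\infty$-diameter at most $2\min(W,L)$ and cardinality at least $(W+1)^d$ is exactly the check implicit in the paper's argument.
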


The corollary is particularly interesting in the case when $W$ is a
large parameter, $L \gg W$, and $\psi(r)$ is small for $\|r\|_\infty
\gg W$; in this case $H_L$ is informally called a Gaussian band
matrix of bandwidth $W$. One way to construct such matrices is to
choose, slightly modifying the  definition used by Erd\H{o}s and
Knowles \cite{EK2}, the shape function $\psi$ of the form
\begin{equation}\label{eq:ekband}
  \psi(r) = \frac{\phi(\frac{r}{W})}{W^d},\quad r\in\mathbb{Z}^d
\end{equation}
for an almost everywhere continuous function $\phi:\mathbb{R}^d \to [0,\infty)$
satisfying $\phi(-r) = \phi(r)$ and $0<\int \phi(r) dr<\infty$.
If $H_L$ is constructed in this way, and if
\begin{equation}\label{eq:bdd0}
\phi(\rho) \geq \delta  \quad \text{for} \quad \| \rho \|_\infty
\leq \epsilon~
\end{equation}
with some $0<\epsilon\le \frac{4L}{W}$ and
$\delta>0$, then, for any interval $I \subset \mathbb{R}$,
\begin{equation}\label{eq:band_matrix_Wegner}
\mathbb{E} \mathcal{N}(H_L, I) \leq
K \, (2L+1)^d |I|~,
\end{equation}
where $K =C
\sqrt{\frac{1}{\delta}\left(\frac{2}{\epsilon}\right)^d}$. This
follows from Corollary~\ref{cor:metrically_transitive_Wegner}
applied to to the matrix
$\sqrt{\frac{1}{\delta}\left(\frac{2}{\epsilon}\right)^d} H_L$ with
$\lfloor\frac{\epsilon W}{2}\rfloor$ in place of $W$.

Another example of Gaussian band matrices, in which Corollary
\ref{cor:metrically_transitive_Wegner} can be applied to deduce
(\ref{eq:band_matrix_Wegner}) with a constant $K$ independent of
$W$, is given by
\begin{equation}\label{eq:susy}\begin{split}
&\mathbb{E} H_L(x, y) \overline{H_L(x', y')}  \\
&= (-W^2 \Delta + \mathbbm{1})^{-1}(x, y)  \times
\begin{cases}
 \mathbbm{1}_{x=x', y=y'} + \mathbbm{1}_{x=y', y=x'}~,
&\text{orthogonal case} \\
 \mathbbm{1}_{x=x', y=y'}~,
&\text{unitary case}~,\end{cases}
\end{split}
\end{equation}
where $\Delta$ is the discrete Laplacian on $\mathbb{Z}^d$, $d\ge
1$.

This example was studied by Disertori, Pinson, and Spencer
\cite{DPS}, who proved an estimate of the form
 (\ref{eq:band_matrix_Wegner}) for the unitary case in dimension $d=3$. Very recently,
a parallel result for $d = 1$ was proved by M.\ and T.\ Shcherbina
\cite{ShchX2}, and for $d = 2$ --- by Disertori and Lager \cite{DL}.

To the best of our knowledge, these are the only previously known
estimates of the form (\ref{eq:band_matrix_Wegner}) for any kind of band matrices
which are valid for arbitrarily short intervals $I$ uniformly in $W$;  see further
\cite[\S 3]{Sp_band} for a discussion of the problem.
We remark that the methods of \cite{DPS,DL} and \cite{ShchX2} allow to
go beyond a uniform bound on the density of states, and provide a differentiable
asymptotic expansion for it in powers of $W^{-2}$. On the other hand, these methods make essential use of the particular structure (\ref{eq:susy}).

In a generality similar to that of Definition~\ref{defin:band},
Bogachev, Molchanov, and Pastur \cite{BMP} and Khorunzhiy,
Molchanov, and Pastur \cite{KMP} found the limit of
$\mathcal{N}(H_L, I)/(2L+1)^d$ (with or without the expectation)
for a fixed interval $I$ as $W, L \to \infty$; this limit
is bounded by a constant times the length of $I$. The results of
Erd\H{o}s, Yau, and Yin \cite{EYY} (and, in a slightly different
setting, of \cite{band2}) yield an estimate of the form
(\ref{eq:band_matrix_Wegner}) for intervals $I$ of length $|I| \geq
W^{-1+\epsilon}$.

\begin{proof}[Proof of Corollary~\ref{cor:metrically_transitive_Wegner}]
Using the assumption that $0\le W\le 2L$ we may partition
$\{-L,-L+1,\ldots, L\}$ into disjoint discrete intervals $I_j$,
$1\le j\le\ell$, satisfying $W+1\le |I_j|\le 2W+1$ for all $j$ (if
$W\ge L$ then the partition necessarily has $\ell = 1$ and $|I_1| =
2L+1$). Correspondingly, write
\begin{equation}\label{eq:index_set_decomposition}
\Lambda_{L}^d = \biguplus_{j=1}^{\ell^d} B_j
\end{equation}
where the $B_j$ are all Cartesian products of the form $J_1\times
J_2\times\cdots\times J_d$ where each $J_i$ is one of the intervals
$I_j$.

Now consider the matrix $H_L$ as a block matrix, where the partition
of the index set $\Lambda_{L}^d$ into blocks is given by
\eqref{eq:index_set_decomposition}. The assumption
\eqref{eq:metrically_transitive_boundedness}, the fact that the
entries of $H_L$ are Gaussian and the observation in
Remark~\ref{rmk:GOE_GUE_distribution} allow us to write
\begin{equation}
  H_L = H_L^0 + V_L
\end{equation}
where $V_L$ is a block-diagonal matrix, with the diagonal blocks
distributed as GOE in the orthogonal case and as GUE in the unitary
case, and where $H_L^0$ is an Hermitian matrix, independent of
$V_L$, with jointly Gaussian entries which are real in the
orthogonal case. Thus, conditioning on $H_L^0$, the estimate
\eqref{eq:metrically_transitive_Wegner} follows from
Theorem~\ref{thm:weg} applied with $k = \ell^d$ and $N_j = |B_j|$.
\end{proof}

\paragraph{Minami estimates}

In the same setting as (\ref{eq:fweg}), Minami established
\cite{Minami} the bound:
\begin{equation}\label{eq:fminami}
\mathbb{E} \mathcal{N}(H, I) (\mathcal{N}(H, I) - 1) \leq (C \|p\|_\infty |\Lambda| \, |I|)^2~.
\end{equation}
The bound (\ref{eq:fminami}) rules out attraction between
eigenvalues in the local regime; it is a key step in Minami's proof
of Poisson statistics for the Anderson model in the regime of
Anderson localisation. Subsequently, additional proofs and
generalisations of (\ref{eq:fminami}) were found, among which we
mention the argument of Combes, Germinet and Klein \cite{CGK}.

The next result is a counterpart of (\ref{eq:fminami}) in our block setting. As in
Theorem~\ref{thm:weg}, the central feature is the dependence on the sizes of the blocks.

\begin{thm}\label{thm:min} There exists $C>0$ such that the following holds. Let $H$ be a deformed
block-Gaussian matrix as in \eqref{eq:block_Gaussian_model}, in either the orthogonal case or the unitary case. Then,
for any integer $m\ge 1$ and interval $I \subset \mathbb{R}$,
\begin{equation}\label{eq:min_k}
\mathbb{E} \prod_{\ell=0}^{m-1} (\mathcal{N}(H, I)-\ell) \leq \left(C
\sum_{j=1}^k N_j \, |I|\right)^m~,
\end{equation}
and, consequently,
\[ \mathbb{P} \left\{ \mathcal{N}(H, I) \geq m \right\} \leq \frac{1}{m!} \left(C \sum_{j=1}^k N_j \, |I|\right)^m~. \]
\end{thm}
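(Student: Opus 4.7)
My plan is to mimic the structure used for the $m=1$ case (Theorem~\ref{thm:weg}): invoke the representation formula of Proposition~\ref{l:formula} to reduce the factorial moment of $\mathcal{N}(H,I)$ to a combination of factorial moments of single-block eigenvalue counts, bound each single-block contribution by a Minami estimate for a GOE/GUE matrix, and finally combine via the multinomial theorem and Markov's inequality.

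The single-block input I would use is the following: for an $N \times N$ GOE or GUE random matrix $V$, any deterministic matrix $A$ of matching symmetry class, and any interval $J \subset \mathbb{R}$,
\[
\mathbb{E}\prod_{\ell=0}^{m-1}\bigl(\mathcal{N}(V+A,J)-\ell\bigr) \leq (CN|J|)^m.
\]
The $m=1$ case is known (orthogonal case in \cite{APSSS}, unitary case in \cite{Pch}) and gives a uniform-in-$A$ bound on the one-point density. Extension to arbitrary $m$ follows from the Pfaffian/determinantal structure of the $m$-point correlation functions of GOE/GUE (still valid after a deterministic additive perturbation) together with the classical domination of the $m$-point correlation by a product of one-point densities (e.g.\ $\rho_m \leq \rho_1^m$ in the determinantal case).

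To pass from a single block to $k$ blocks, I would condition on $H_0$ and iteratively apply the representation formula of Proposition~\ref{l:formula}, which should deliver a decomposition
\[
\mathcal{N}(H,I) \leq \sum_{j=1}^k n_j,
\]
where each $n_j$ is a conditionally-independent, integer-valued count of eigenvalues of a perturbed GOE/GUE $j$-th block inside an interval of length $|I|$. Using the falling-factorial multinomial identity, conditional independence across blocks, and the single-block estimate,
\[
\mathbb{E}\prod_{\ell=0}^{m-1}(\mathcal{N}(H,I)-\ell) \leq \sum_{m_1+\cdots+m_k=m}\binom{m}{m_1,\ldots,m_k}\prod_{j=1}^k \mathbb{E}\prod_{\ell=0}^{m_j-1}(n_j-\ell) \leq \bigl(C\sum_{j=1}^k N_j|I|\bigr)^m.
\]
The probability bound then follows from $\mathbb{P}\{\mathcal{N}(H,I)\geq m\}\leq \frac{1}{m!}\mathbb{E}\prod_{\ell=0}^{m-1}(\mathcal{N}(H,I)-\ell)$.

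The step I expect to be the main obstacle is extracting the clean per-block decomposition $\mathcal{N}(H,I)\leq\sum_j n_j$ with the right independence structure from Proposition~\ref{l:formula}: the formula is advertised as yielding both the Wegner and the Minami estimates, so the $m=1$ bound on expectations is presumably immediate, but preserving the falling-factorial structure for higher $m$ requires the $n_j$ to be integer-valued single-block counts of GOE/GUE-plus-deterministic matrices on intervals of length $|I|$. If Proposition~\ref{l:formula} instead produces a real-valued, Krein-spectral-shift-type decomposition, extra work will be needed to round or to work directly at the level of $m$-point correlation functions of the perturbed blocks before integrating against $\mathbbm{1}_{I^m}$.
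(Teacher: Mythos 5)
The obstacle you flag at the end is indeed fatal to the route you propose, on two counts. First, Proposition~\ref{l:formula} does not produce a decomposition of $\mathcal{N}(H,I)$ into conditionally independent integer-valued single-block counts: the matrices $A(j,\lambda,\eta,t)$ are built (via a Schur complement) out of $H_0$ \emph{and all the other blocks} $(V(\ell))_{\ell\neq j}$, so the per-block counts $\mathcal{N}(V(j)+A(j,\lambda,\eta,t),(-\xi,\xi))$ for different $j$ are strongly dependent even after conditioning on $H_0$; moreover the formula expresses $\mathcal{N}(H,I)$ as a limit of \emph{averages} over $\lambda,t,\xi$ of such counts, not as a sum of integer-valued $n_j$. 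Without independence the multinomial factorization of the falling-factorial moment collapses, and there is no obvious repair. Second, your single-block input $\mathbb{E}\prod_{\ell=0}^{m-1}(\mathcal{N}(V+A,J)-\ell)\le (CN|J|)^m$ is true (it is \cite[Theorem~2]{APSSS}, the $k=1$ case of the statement), but not by the mechanism you invoke: the deformed GOE $V+A$ is \emph{not} a Pfaffian point process for general symmetric $A$, and even in the unitary case the deformed-GUE correlation kernel is non-Hermitian, so the domination $\rho_m\le\rho_1^m$ is not available off the shelf. In other words, the higher-$m$ single-block estimate is itself of the same difficulty as the general statement and is proved in \cite{APSSS} by a Combes--Germinet--Klein-type argument, not by exact solvability.

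The paper's proof avoids both issues by inducting on $m$ and applying the representation \eqref{eq:formula} to \emph{one} factor $\mathcal{N}(H,I)$ only, leaving the product $\prod_{\ell=1}^{m-1}(\mathcal{N}(H,I)-\ell)$ intact. A dyadic refinement turns the single-block count into indicators that $\|B_{j,J}^{-1}\|_{\operatorname{op}}$ is large; a uniformly distributed unit vector $v$ (Lemma~\ref{l:qf1}) converts this into an event about $\|B_{j,J}^{-1}v\|$; and the crucial decoupling comes from the rank-one perturbation $H_{v,\tau}=H+\tau P_j^* v v^* P_j$: by interlacing, the leftover product is dominated by $\lim_{\tau\to\infty}\prod_{\ell=0}^{m-2}(\mathcal{N}(H_{v,\tau},I)-\ell)$, which is measurable with respect to $\hat{P}_{(P_j^*v)^\perp}H\hat{P}_{(P_j^*v)^\perp}^*$, so the conditional single-block bound \eqref{eq:cond_vec} of Proposition~\ref{prop:vec} factors it out at cost $C N_j|J|$, and the induction hypothesis (with base case Theorem~\ref{thm:weg}) applies to $H_{v,\tau}$, which is again a deformed block-Gaussian matrix conditionally on $v$. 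If you want to salvage your outline, you would have to replace the independence-plus-multinomial step by this peeling/induction mechanism; as written, the argument does not go through.
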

The case $m=1$ in the theorem is the Wegner estimate discussed in
Theorem~\ref{thm:weg} whereas the cases $m\ge 2$ are Minami-type
estimates.

\paragraph{Localisation for one-dimensional band matrices}  Band matrices in one dimension ($d=1$) have been studied extensively in the physics literature as a simple model in which
the quantum dynamics exhibits crossover from quantum diffusion to
localisation, see \cite{Casati:1990p3447, Casati:1993p35, FM}. Based
on those works, the following crossover is expected: considering
band matrices of dimension $L$ and bandwidth $W$, when $W \ll
\sqrt{L}$, each eigenvector has appreciable overlap with a
vanishingly small fraction of the standard basis vectors in the
large $L$ limit, whereas for $W \gg \sqrt{L}$ a typical eigenvector
has overlap of order $1/\sqrt{L}$ with most standard basis vectors.
A related conjecture states that the $i,j$-entry of the resolvent
should decay as $\exp(- C |i-j| W^{-2})$ for $W \ll \sqrt{L}$.

In \cite{Sch1}, one of us studied the localisation side of this
problem. In that paper it was shown that certain ensembles of random
matrices whose entries vanish outside a band of width $W$ around the
diagonal satisfy a localisation condition  in the limit that the
size of the matrix $L$  tends to infinity such that $W^{8}/L
\rightarrow 0$.  For Gaussian band matrices, our present work
settles  \cite[Problem 2]{Sch1} in the positive, thereby allowing to improve the
result there slightly by replacing the exponent $8$ with the
exponent $7$ (which is still a bit away from the expected optimal
exponent $2$).
\begin{thm}\label{thm:resolvent}  Let $H_L$ be a Gaussian random matrix with domain $\Lambda_L = \{-L,-L+1,\ldots,
L\}$ and shape function $\psi$ as in Definition \ref{defin:band} in
either the orthogonal case or the unitary case. Let $W$ be an
integer dividing $2L+1$ and suppose that $\psi$ is the sharp cutoff
function
\begin{equation}\label{eq:sharp}\psi(r) = \begin{cases} \frac{1}{W} & |x|< W \\ 0 & |x| \ge W \end{cases}~.\end{equation}
Then, given $\rho >0$ and $s  \in (0,1)$ there are $A < \infty$
and $\alpha > 0$ such that
\begin{equation}\label{eq:resolvloc}
  \mathbb{E}\left ( \left | (H_L - \lambda)^{-1}(i,j)\right |^s \right ) \ \le \ A  W^{\frac{s}{2}} e^{-\alpha \frac{|i-j|}{W^{7}}}
\end{equation}
for all $\lambda \in [-\rho,\rho]$ and all $i,j\in\{-L,\ldots, L\}$.
\end{thm}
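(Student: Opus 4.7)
I would follow the block-decomposition fractional-moment scheme of Schenker \cite{Sch1}, feeding in the uniform-in-$W$ Wegner estimate (Corollary~\ref{cor:metrically_transitive_Wegner}) and the Minami estimate (Theorem~\ref{thm:min}) as the crucial new inputs. As the statement itself indicates, the entire improvement from the exponent $W^{-8}$ to $W^{-7}$ is meant to come from settling \cite[Problem 2]{Sch1} in the affirmative, which is precisely the content of our uniform-in-$W$ single-block density-of-states bound; the rest of the argument is structurally identical to \cite{Sch1}.

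The set-up is as follows. Since $W$ divides $2L+1$, partition $\Lambda_L = \biguplus_{j=1}^{\ell} B_j$ into consecutive intervals of length $W$, where $\ell = (2L+1)/W$. Because $\psi(r) = 0$ for $|r|\ge W$, the matrix $H_L$ is nearest-neighbour block-tridiagonal with respect to this partition: $H_L(B_j,B_{j'})=0$ whenever $|j-j'|\ge 2$. In particular, after conditioning on the off-diagonal blocks $H_L(B_j,B_{j+1})$, the diagonal blocks $H_L(B_j,B_j)$ are independent GOE/GUE matrices (up to an explicit rescaling, cf.\ Remark~\ref{rmk:GOE_GUE_distribution}), and $H_L$ falls into the deformed block-Gaussian framework~\eqref{eq:block_Gaussian_model} with $k=\ell$ and $N_j=W$.

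Next, I would use a Schur-complement/resolvent-identity step to propagate a fractional-moment bound across one block at a time. At each step, a Combes--Thomas-type deterministic estimate controls the off-diagonal resolvent of the locally modified Hamiltonian in terms of the operator norm of a single-block inverse $(H_{B_j} - \lambda - \Sigma_j)^{-1}$, where $\Sigma_j$ is a self-energy arising from the complementary blocks and is independent of $H_{B_j}$. The fractional moment of this norm is bounded, uniformly in $W$ and in $\Sigma_j$, by a single-block estimate of the form $C_s W^{s/2}$; this is the step that directly uses Theorem~\ref{thm:weg} (together with a Chebyshev-type conversion of the density-of-states bound into a resolvent tail bound, enhanced by Theorem~\ref{thm:min} when two close eigenvalues have to be ruled out). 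Iterating across the $\sim |i-j|/W$ blocks separating $i$ from $j$ yields a product of local fractional-moment factors of the form $(C\,W^{\gamma s})^{|i-j|/W}$, whose bookkeeping, performed as in \cite{Sch1}, gives the claimed decay $\exp(-\alpha |i-j|/W^7)$ together with the prefactor $AW^{s/2}$.

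The main obstacle is the quantitative bookkeeping inside this iteration. The mechanism is entirely standard, but extracting the sharp power of $W$ requires carefully matching the Wegner/Minami-based tail estimate on the local resolvent (where the gain from the present paper's uniform density-of-states bound is felt directly, saving exactly one power of $W$ relative to \cite{Sch1}) against the deterministic Combes--Thomas loss paid each time a block is traversed and against the combinatorial cost of summing over possible ``resonant'' blocks. A secondary but essential point is that both Wegner- and Minami-type inputs have to be uniform in the deterministic part $\Sigma_j$ arising from conditioning, which is exactly what Theorems~\ref{thm:weg} and~\ref{thm:min} provide through their validity for arbitrary deformed block-Gaussian matrices. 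Pushing below $W^{-7}$ appears to require a genuinely new idea, as it would demand working with sharp off-diagonal fractional-moment bounds on single blocks rather than just the norm.
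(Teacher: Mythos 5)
Your overall route is the paper's: the proof does not actually re-run the block fractional-moment iteration, but invokes \cite[Theorem 6]{Sch1} directly, which gives, for any fixed $t\in(s,1)$, the bound $\mathbb{E}|(H_L-\lambda)^{-1}(i,j)|^s \le C\, M(W,t)^{s/t} e^{-\mu W^{-2\nu-1}|i-j|}$ with $M(W,t)=\max_{i,j}\mathbb{E}|(H_L-\lambda)^{-1}(i,j)|^t$ and $\nu \ge \max(2,\zeta+\max(a,1+\sigma+2b))$; for the Gaussian blocks one has $\zeta=2$, $a=b=0$, so the whole improvement is in the Wegner exponent $\sigma$, and $\sigma=0$ yields $2\nu+1=7$. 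Here your identification of the probabilistic inputs is off in a way that matters. The Minami estimate (Theorem~\ref{thm:min}) and Corollary~\ref{cor:metrically_transitive_Wegner} play no role. What Schenker's hypotheses require --- and what constitutes the resolution of \cite[Problem 2]{Sch1} --- is a Cauchy-type tail bound at scale $W$ not only for a single deformed block, $\mathbb{P}\{\|(V-A)^{-1}\|_{\operatorname{op}}>RW\}\le \kappa/R$ (which is already in \cite{APSSS}, i.e.\ Proposition~\ref{prop:vec}), but also for the two-block matrix $\left(\begin{smallmatrix} V-A & D\\ D^* & V'-B\end{smallmatrix}\right)$ with independent GOE/GUE blocks $V,V'$, arbitrary Hermitian $A,B$ and an \emph{arbitrary} deterministic coupling $D$. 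This two-block estimate is precisely where Theorem~\ref{thm:weg} with $k=2$ enters, and it is absent from your plan; without it the decoupling step across adjacent blocks in \cite{Sch1} cannot be carried out with $\sigma=0$, and one is back to the exponent $8$. (Your appeal to Theorem~\ref{thm:min} ``to rule out two close eigenvalues'' does not substitute for it: the issue is the norm of the inverse of a coupled two-block matrix, a Wegner-type question, not eigenvalue repulsion within one block.)

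A second inaccuracy concerns the prefactor. The tail bound above only gives $\mathbb{E}\|(V-\lambda-\Sigma)^{-1}\|_{\operatorname{op}}^s \le C_s W^{s}$, not $C_sW^{s/2}$ as you assert for the local factors. In the paper the factor $W^{s/2}$ has a different origin: the a priori entrywise bound $M(W,t)\le C_t W^{t/2}$, which follows from the Wegner-type estimate \cite[Theorem II.1]{AM} (reflecting that the matrix entries have variance $1/W$), and which is then inserted into the bound of \cite[Theorem 6]{Sch1} quoted above. Relatedly, you never actually derive the exponent $7$; in the paper it is read off as $2\nu+1$ with $\nu=\zeta+1+\sigma=3$ once $\sigma=0$ is established. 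With these corrections --- replace Minami/Corollary~\ref{cor:metrically_transitive_Wegner} by Theorem~\ref{thm:weg} applied with $k=1,2$, and obtain the prefactor from \cite[Theorem II.1]{AM} rather than from the iteration --- your plan coincides with the paper's proof.
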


\begin{rmk} The theorem implies (using the resolvent
identity) that a similar estimate holds
without the assumption $2L+1 \equiv 0 \mod W$. 
\end{rmk}

\section{Proof of the theorems}

In this section we prove the main results of the paper. We use the
following result from \cite{APSSS}, where the  object of study was
the regularizing effect of adding a Gaussian random matrix to a
given deterministic matrix. The GUE case of (\ref{eq:dos}) was also
proved by Pchelin \cite{Pch}.
\begin{prop}\cite[Theorem 1 and Remark 2.2]{APSSS}\label{prop:vec}
\hfill
\vspace{-.2cm}
\begin{tabbing}
If either:  \= $A$ is
  an $N\times N$ real symmetric matrix, \=$v\in\mathbb{R}^N$, \=and $V$ is sampled from  GOE,  \\
or: \> $A$ is
  an $N\times N$ Hermitian matrix,  \>$v\in\mathbb{C}^N$, \>and  $V$ is sampled from  GUE,  \\
\end{tabbing}
\vspace{-.7cm}
then  the matrix  $A+V$ satisfies the bounds:
\begin{align}\label{eq:vec1}
&\mathbb{P} \left\{ \| (A+V)^{-1} v \| \geq t \sqrt{N} \|v\|\right\} \leq \frac{C}{t} \, , \quad t \geq 1,\\
\label{eq:dos}
&\mathbb{E} \,  \mathcal{N}(A + V, I) \leq \   C   N|I|~, \quad \text{$I$ is an interval in $\mathbb{R}$}
\end{align}
with a constant $C<\infty $ which is uniform in  $N$, $A$, and $v$.
Moreover, the following stronger version of \eqref{eq:vec1} holds:
almost surely,
\begin{equation}\label{eq:cond_vec}
\mathbb{P} \left\{ \| (A+V)^{-1} v \| \geq t \sqrt{N} \|v\| \;\Big|\; \hat{P}_{v^\perp} V \hat{P}_{v^\perp}^*\right\} \leq \frac{C}{t} \, ,
\quad t \geq 1~,\\
\end{equation}
where $\hat{P}_{v^\perp}: \mathbb{C}^N \to \mathbb{C}^{N} /
\mathbb{C}v \simeq v^\perp$ is the canonical projection, and
$v^\perp$ is the orthogonal complement to $v$.
\end{prop}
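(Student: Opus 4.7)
The plan is to establish the Wegner-type density of states bound \eqref{eq:dos} first and then deduce the tail bounds \eqref{eq:vec1} and \eqref{eq:cond_vec} from it via an imaginary-part regularisation. By the orthogonal/unitary invariance of GOE/GUE, it is enough to work with $v = e_1$ throughout: any unit vector $v$ is brought to $e_1$ by a rotation which leaves the law of $V$ invariant while modifying $A$ to a new deterministic Hermitian matrix.

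For \eqref{eq:dos}, I would show that the averaged spectral density at $e_1$, $\rho_{e_1}(\lambda) := \pi^{-1}\lim_{\eta\to 0^+}\mathrm{Im}\,\mathbb{E}\langle e_1,(A+V-\lambda-i\eta)^{-1}e_1\rangle$, is bounded by a universal constant, uniformly in $A$, $\lambda$, $N$; summing $\mathbb{E}\langle e_i,\mathbbm{1}_I(A+V)e_i\rangle \le C|I|$ over the standard basis then gives \eqref{eq:dos}. The tool is the Schur-complement expansion $\langle e_1,(A+V-z)^{-1}e_1\rangle = 1/D$ with $D = (A_{11}+V_{11}) - z - c^*(B-z)^{-1}c$, where the scalar $V_{11}$, the column $c$, and the $(N-1)\times(N-1)$ minor $B$ of $V$ are jointly independent. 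Writing $\mathrm{Im}(1/D) = \eta(1+\Phi)/|D|^2$ with $\Phi := c^*((B-E)^2+\eta^2)^{-1}c$, the target $O(1)$ bound is obtained by combining the anti-concentration of the Gaussian $V_{11}$ (density $\lesssim \sqrt N$) with the concentration of $\Phi$ around its conditional mean $\mathbb{E}[\Phi \mid B]$, which in the bulk is of order $1/\eta$ and renders the effective imaginary part $\eta(1+\Phi)$ of $D$ of order one.

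Granted \eqref{eq:dos}, the tail estimate \eqref{eq:vec1} follows by a short argument. Normalize $\|v\|=1$ and set $\eta := 1/(tN)$ and $R_\eta := (A+V-i\eta)^{-1}$. On the event $\mathcal{E} := \{|\lambda_j(A+V)| \ge \eta\text{ for all }j\}$, the termwise inequality $1/\lambda^2 \le 2/(\lambda^2+\eta^2)$ and the spectral theorem give $\|(A+V)^{-1}e_1\|^2 \le 2\|R_\eta e_1\|^2$. By \eqref{eq:dos}, $\mathbb{P}(\mathcal{E}^c) \le \mathbb{E}\,\mathcal{N}(A+V,[-\eta,\eta]) \le 2CN\eta = 2C/t$. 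Simultaneously, the identity $\|R_\eta e_1\|^2 = \eta^{-1}\mathrm{Im}\langle e_1, R_\eta e_1\rangle$ together with the single-vector bound $\rho_{e_1} \le C$ from the previous step yields $\mathbb{E}\|R_\eta e_1\|^2 \le C\pi/\eta$, whence Markov's inequality gives $\mathbb{P}(\|R_\eta e_1\|^2 \ge t^2 N/2) \le 2C\pi/t$. Combining the two estimates produces \eqref{eq:vec1}. The conditional form \eqref{eq:cond_vec} follows from the same argument performed under the conditioning on $\hat{P}_{v^\perp} V \hat{P}_{v^\perp}^*$: after the rotation $v \mapsto e_1$, this conditioning fixes $B$ while leaving $V_{11}$ and $c$ conditionally independent Gaussians, which is precisely the data that both the Schur-complement analysis and the imaginary-part estimate require.

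The principal difficulty is the uniform-in-$N$ bound on $\rho_{e_1}$. A naive analysis that anti-concentrates only the diagonal entry $V_{11}$ incurs a spurious $\sqrt N$ factor, corresponding to the $1/\sqrt N$ width of that single Gaussian; the sharp bound requires simultaneously harnessing the randomness of the column $c$ via the concentration of the quadratic form $\Phi$, so that the denominator $D$ has effective imaginary part of order one rather than of order $1/\sqrt N$.
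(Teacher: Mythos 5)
Your overall strategy rests on an intermediate claim that is false, so the argument has a genuine gap (note also that the paper itself does not prove Proposition~\ref{prop:vec}; it imports it from \cite{APSSS}, and your route would have to stand on its own). The pivotal assertion is that the single-vector density $\rho_{e_1}(\lambda)=\pi^{-1}\lim_{\eta\to 0^+}\Im\,\mathbb{E}\langle e_1,(A+V-\lambda-i\eta)^{-1}e_1\rangle$ is bounded by a universal constant uniformly in $A$, $N$, $\lambda$. This fails: take $A=\operatorname{diag}(0,M,\dots,M)$ with $M$ large. Then $A+V$ has exactly one eigenvalue near the origin, equal to $V_{11}+O(1/M)$, with eigenvector $e_1+O(1/M)$, so for a short interval $I$ around $0$ one has $\mathbb{E}\langle e_1,\mathbbm{1}_I(A+V)e_1\rangle\approx\mathbb{P}\{V_{11}\in I\}\asymp\sqrt{N}\,|I|$, i.e.\ $\rho_{e_1}(0)\asymp\sqrt{N}$. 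The mechanism you propose for beating the $\sqrt N$ --- concentration of $\Phi=c^*((B-\lambda)^2+\eta^2)^{-1}c$ at size $1/\eta$ --- is circular: $\mathbb{E}[\Phi\mid B]$ is of order $1/\eta$ only when the minor $B$ already has local density of order one near $\lambda$, which is a statement of exactly the kind you are trying to prove, and it fails outright in the gapped example above, where $\Phi=O(1)$ and only the $O(\sqrt N)$ density of $V_{11}$ provides anti-concentration. Only the trace-averaged version of your claim is true, and that is precisely \eqref{eq:dos}, so the per-vector route gives no purchase; the conditional statement \eqref{eq:cond_vec} is affected even more severely, since conditioning on $B$ removes any possibility of averaging over the minor's spectrum.

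The same false input sinks your derivation of \eqref{eq:vec1}, and in fact the moment-plus-Markov step is intrinsically too lossy even if one inserts the correct worst-case order $\rho_{e_1}\asymp\sqrt N$: in the example above $\|(A+V)^{-1}e_1\|\approx 1/|V_{11}|$ has a Cauchy-type tail at scale $\sqrt N$, so $\mathbb{E}\|R_\eta e_1\|^2\asymp\sqrt N/\eta$ and, with your choice $\eta=1/(tN)$, Chebyshev yields only a bound of order $\sqrt N/t$ rather than $C/t$ --- exactly the loss that produces the classical $N^{3/2}$ localisation threshold which this paper is designed to avoid. Note that \eqref{eq:vec1} is nevertheless true in this example, but for a reason a second-moment bound cannot see: the event $\{\|(A+V)^{-1}e_1\|\ge t\sqrt N\}$ is controlled by anti-concentration of the (conditional) Gaussian data in the distinguished row at scale $1/(t\sqrt N)$, where the $\sqrt N$ from the density of $V_{11}$ cancels against the width of the window. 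This is the structure exploited in \cite{APSSS}: one analyses the Schur complement of $\langle v,(A+V)^{-1}v\rangle$ conditionally on the minor $\hat P_{v^\perp}(A+V)\hat P_{v^\perp}^*$, handling jointly the anti-concentration of the scalar Schur complement and the norm-amplification factor relating $\|(A+V)^{-1}v\|$ to $|\langle v,(A+V)^{-1}v\rangle|$ (which is why the conditional form \eqref{eq:cond_vec} comes out of the argument). To repair your proposal you would need to replace both the uniform $O(1)$ bound on $\rho_{e_1}$ and the Markov step by a genuine conditional anti-concentration argument of this type.
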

In the setting of Proposition~\ref{prop:vec} we note the following
consequence of (\ref{eq:vec1}),
\begin{equation}\label{eq:vec}
\mathbb{E} \| (A+V)^{-1} v \|^s \leq \frac{C_0 N^{\frac{s}{2}}}{1-s} \|v\|^s~, \quad 0 < s < 1~,
\end{equation}
where $C_0$ is an absolute constant (uniform in all the parameters
of the problem).

\begin{rmk}
Our proofs of Theorem~\ref{thm:loc} (localisation), Theorem~\ref{thm:weg} (Wegner-type estimate) and
Theorem~\ref{thm:min} (Minami-type estimate) rely on the Gaussian
structure of the underlying random matrix ensembles only through
Proposition~\ref{prop:vec} (and the simple observation of
Remark~\ref{rmk:not_eigenvalue} below). Thus, if an extension of the
proposition to other random matrix ensembles is found, corresponding
extensions of our theorems will follow. 
\end{rmk}

\begin{rmk}\label{rmk:not_eigenvalue}
  For the random matrix models discussed in our theorems, any given $\lambda\in\mathbb{R}$ is almost surely
  not an eigenvalue. This follows, for instance, from the following observation: if
  $H$ is a random matrix satisfying that for any $\mu\in\mathbb{R}$, the
  distribution of $H$ and the distribution of $H-\mu$ are
  mutually absolutely continuous then any given $\lambda\in\mathbb{R}$ is
  almost surely not an eigenvalue of $H$.
\end{rmk}

\begin{proof}[Proof of Theorem~\ref{thm:loc}]
Denote by $G_\lambda[\tilde{H}] = (\tilde{H} - \lambda)^{-1}$ the
resolvent of an operator $\tilde{H}$. For $\tilde{x} \in
\tilde{\Lambda} \subset \Lambda$, let $\mathcal{F}_{\tilde{\Lambda},
\tilde{x}}$ be the $\sigma$-algebra generated by all
$H_{\tilde{\Lambda}}(w, w')$, where $w, w' \in \tilde{\Lambda}$ and
$(w, w') \neq (\tilde{x}, \tilde{x})$.

Observe the following corollary of the Schur--Banachiewicz formula
for block matrix inversion: for any $\tilde{x} \in \tilde{\Lambda}
\subset \Lambda$,
\begin{equation}\label{eq:sbeq}
G_\lambda[H_{\tilde\Lambda}](\tilde{x}, \tilde{x}) = (V(\tilde{x}) - \lambda - \Sigma)^{-1}~,
\end{equation}
where $\Sigma$ is measurable with respect to
$\mathcal{F}_{\tilde{\Lambda}, \tilde{x}}$. Consequently, by
(\ref{eq:vec}), almost surely,
\begin{equation}\label{eq:sbeqcor}
\mathbb{E} \left[ \left\| G_\lambda[H_{\tilde\Lambda}](\tilde{x},
\tilde{x}) \tilde{v} \right\|^s \, \mid \,
\mathcal{F}_{\tilde{\Lambda}, \tilde{x}}\right] \leq \frac{C_0
N^{\frac{s}{2}}}{1-s} \mathbb{E} \|\tilde{v}\|^s
\end{equation}
whenever $\tilde{v}$ is a random vector which is measurable with
respect to $\mathcal{F}_{\tilde{\Lambda}, \tilde{x}}$.

Next, we use the following representation of $G_\lambda[H_\Lambda](x, y)$:
\begin{equation}\label{eq:rwe}
\begin{split}
&G_\lambda[H_\Lambda](x, y) = \sum_{k \geq \|x-y\|_1} (-1)^k \sum_{\pi \in \Pi_k(x, y)}
G_{\lambda}[H_\Lambda](\pi_0, \pi_0)   W(\pi_0, \pi_1)
G_{\lambda}[H_{\Lambda\setminus \{\pi_0\}}] (\pi_1, \pi_1) \\
&\quad W(\pi_1, \pi_2) G_{\lambda}[H_{\Lambda\setminus \{\pi_0, \pi_1\}}] (\pi_2, \pi_2) \cdots W(\pi_{k-1}, \pi_k) G_{\lambda}[H_{\Lambda\setminus\{\pi_0,\pi_1,\cdots,\pi_{k-1}\}}](\pi_k, \pi_k)~,
\end{split}
\end{equation}
where for $y = x$ the right-hand side is interpreted as $G_{\lambda}[H_\Lambda](x, x)$, and for
$y \neq x$ the collection $\Pi_k(x, y)$ includes all tuples of pairwise distinct vertices
$\pi_0, \pi_1, \cdots, \pi_k \in \Lambda $ such that $x = \pi_0 \sim \pi_1 \sim \pi_2 \cdots \sim \pi_k = y$.

Indeed, the representation is tautological for $y=x$, and for $y\neq x$ it follows by iterating the equality
\[ G_\lambda[H_\Lambda](x, y) = - \sum_{\pi_1 \sim x} G_\lambda[H_{\Lambda}](x, x) W(x, \pi_1) G_\lambda[H_{\Lambda \setminus \{x\}}] (\pi_1, y)~. \]
The latter is in turn a corollary of the second resolvent identity applied to  the operators $H_\Lambda - \lambda$ and $H_\Lambda^x - \lambda$, where $H_\Lambda^x$ is obtained from  $H_\Lambda$ by setting
the blocks $W(x, x')$ and $W(x', x)$ to $0$ for all $x' \sim x$.

Now we turn to the proof of the theorem. We derive from (\ref{eq:rwe}) using the triangle inequality
and $|a+b|^s \leq |a|^s + |b|^s$ that
\begin{equation}\label{eq:rwe'}
\begin{split}
&\|G_\lambda[H_\Lambda](x, y)v\|^s \leq \sum_{k \geq \|x-y\|_1} \sum_{\pi \in \Pi_k(x, y)}
\Big{\|}G_{\lambda}[H_\Lambda](\pi_0, \pi_0)  W(\pi_0, \pi_1)  G_{\lambda}[H_{\Lambda\setminus \{\pi_0\}}](\pi_1, \pi_1) \\
&\quad W(\pi_1, \pi_2)  G_{\lambda}[H_{\Lambda\setminus \{\pi_0, \pi_1\}}] (\pi_2, \pi_2)\cdots W(\pi_{k-1}, \pi_k) G_{\lambda}[H_{\Lambda\setminus\{\pi_0,\pi_1,\cdots,\pi_{k-1}\}}](\pi_k, \pi_k) v \Big{\|}^s~.
\end{split}
\end{equation}
To bound the expectation of a term in (\ref{eq:rwe'}), we repeatedly use (\ref{eq:sbeqcor}) and
the inequality
\begin{equation} \mathbb{E} \|W(\tilde{x}, \tilde{x}')\|_{\operatorname{op}}^s \leq g_{\operatorname{eff}}^s \end{equation}
from (\ref{eq:belthr}). We obtain for the term in (\ref{eq:rwe'}) corresponding to a single $\pi \in \Pi_k(x, y)$:
\begin{equation*}
\mathbb{E} \left\| G_{\lambda}[H_\Lambda](\pi_0, \pi_0)  W(\pi_0, \pi_1) \cdots G_{\lambda}[H_{\Lambda\setminus\{\pi_0,\pi_1,\cdots,\pi_{k-1}\}}](\pi_k, \pi_k) v \right\|^s \leq \left( \frac{C_0 N^{\frac{s}{2}}}{1-s}\right)^{k+1} g_{\operatorname{eff}}^{sk} \, \, \|v \|^s~.
\end{equation*}
The cardinality of $\Pi_k(x, y)$ does not exceed $(2d)^k$. Therefore
\[ \begin{split}
\mathbb{E} \|G_\lambda[H_\Lambda](x, y)v\|^s
&\leq \sum_{k \geq \|x-y\|_1} \left( \frac{C_0 N^{\frac{s}{2}}}{1-s}\right)^{k+1} \left( 2d g_{\operatorname{eff}}^{s} \right)^{k} \, \, \|v \|^s \\
 &\leq 2 \left( \frac{C_0N^{\frac{s}{2}}}{1-s}\right)^{\|x-y\|_1+1} \left( 2d g_{\operatorname{eff}}^{s} \right)^{\|x-y\|_1} \, \, \|v \|^s
 \end{split} \]
 whenever
 \[  \frac{4C_0 d g_{\operatorname{eff}}^{s} N^{\frac{s}{2}}}{1-s} \leq 1~.\]
This is what is claimed in the statement of the theorem, for $C = 4 C_0$.
\end{proof}

\medskip\noindent
The proofs of Theorems~\ref{thm:weg} and \ref{thm:min} are preceded by the following proposition, the purpose of which  is to write $\mathcal{N}(H, I)$ as a linear expression involving terms of the form $\mathcal{N}(V + A, J)$, where $V$ is a random matrix sampled from the GOE (or GUE), $A$ is a symmetric (or Hermitian) matrix independent of $V$ and $J$ is an interval in $\mathbb{R}$.
\begin{prop}\label{l:formula} Let $H$ have the form $H = H_0 + \oplus_{j=1}^k V(j)$ (as in \eqref{eq:block_Gaussian_model}) in which $H_0$ and all $V(j)$ are (deterministic) Hermitian matrices. Then for any interval $I$ the  endpoints of which are not eigenvalues of $H$,
\begin{equation*}
\mathcal{N}(H, I) = \lim_{\eta \to +0} \sum_{j=1}^k\int_I\frac{d\lambda}{\pi} \int_{-\infty}^\infty \frac{dt}{\pi(1+t^2)}
\int_0^\infty  \frac{2\eta \xi\, d\xi}{(\xi^2 + \eta^2)^2}\,\mathcal{N}\Big(V(j) + A(j,\lambda,\eta,t), (-\xi, \xi)\Big)~,
\end{equation*}
where $A(j,\lambda,\eta,t)$ is an Hermitian matrix determined by
$H_0$ and $(V(\ell))_{\ell\neq j}$ (that is, every matrix element of
$A$ is a Borel-measurable function of these variables and $\lambda,
\eta$ and $t$). In addition, if $H_0$ and all $V(j)$ are real, then
the matrices $A$ are real as well.\end{prop}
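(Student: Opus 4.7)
The approach is to combine Stone's formula with a Schur-complement reduction to single blocks, followed by an operator-valued Cauchy representation proved by a short contour argument. Since the endpoints of $I$ are not eigenvalues of $H$, Stone's formula gives
\[
\mathcal{N}(H,I) \;=\; \lim_{\eta\to 0^+}\frac{1}{\pi}\int_I \tr\operatorname{Im}(H-\lambda-i\eta)^{-1}\,d\lambda,
\]
and the trace splits block-diagonally as $\tr\operatorname{Im}(H-\lambda-i\eta)^{-1} = \sum_j \tr\operatorname{Im} G_{jj}(\lambda+i\eta)$, where $G_{jj}(z)$ denotes the $j$-th diagonal block of $(H-z)^{-1}$. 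The Schur complement applied to the block structure of $H$ gives
\[
G_{jj}(\lambda+i\eta) \;=\; \bigl(V(j) + B_j(\lambda,\eta) - iD_j(\lambda,\eta)\bigr)^{-1},
\]
with $B_j := H_0^{(jj)} - \lambda I - \operatorname{Re}\Sigma_j(\lambda+i\eta)$ Hermitian, $D_j := \eta I + \operatorname{Im}\Sigma_j(\lambda+i\eta) \geq \eta I$ positive definite, and $\Sigma_j(z) := H_0^{(j,\hat{j})}(H^{\hat{j}}-z)^{-1}H_0^{(\hat{j},j)}$ the self-energy, where $H^{\hat{j}}$ is the submatrix of $H$ with the $j$-th block row and column removed. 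Both $B_j$ and $D_j$ are determined by $H_0$ and $(V(\ell))_{\ell\neq j}$, and are real symmetric when $H_0$ is.

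The core of the proof is the operator identity
\[
(X - iD)^{-1} \;=\; \int_{-\infty}^{\infty}\frac{dt}{\pi(1+t^2)}\,\bigl(X - tC - i\eta I\bigr)^{-1},\qquad C := D - \eta I \geq 0,
\]
for arbitrary Hermitian $X$ and Hermitian $D \geq \eta I$ with $\eta>0$, which I would establish by contour integration in the complex variable $t$. The matrix-valued integrand is meromorphic, with singularities at $t = \pm i$ (from the Cauchy weight) and at the points $t^*$ where $X - t^* C - i\eta I$ is singular. At any such $t^*$, if $v\neq 0$ lies in the kernel then $(X - t^*C)v = i\eta v$; pairing with $v$ and taking imaginary parts yields $\operatorname{Im}(t^*)\langle v,Cv\rangle = -\eta\|v\|^2$. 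Since $\langle v,Cv\rangle = 0$ would force $Xv = i\eta v$, impossible for Hermitian $X$ and $\eta>0$, we must have $\langle v,Cv\rangle > 0$, and hence $\operatorname{Im}(t^*) < 0$. Closing the contour in the upper half-plane therefore encloses only the simple pole at $t = i$, whose residue equals $(2i)^{-1}(X - iC - i\eta I)^{-1} = (2i)^{-1}(X - iD)^{-1}$; the arc at infinity contributes nothing because $(1+t^2)^{-1} = O(|t|^{-2})$ while the resolvent stays bounded. Applying this identity with $X = V(j) + B_j$ gives
\[
G_{jj}(\lambda+i\eta) \;=\; \int_{-\infty}^{\infty}\frac{dt}{\pi(1+t^2)}\bigl(V(j) + A(j,\lambda,\eta,t) - i\eta I\bigr)^{-1},
\]
with $A(j,\lambda,\eta,t) := B_j(\lambda,\eta) - t\operatorname{Im}\Sigma_j(\lambda+i\eta)$ Hermitian (real when $H_0$ is) and Borel-measurable in its arguments, as required.

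To obtain the stated form, I would rewrite $\tr\operatorname{Im}(V(j) + A - i\eta I)^{-1}$ as the inner $\xi$-integral by a layer-cake identity: since $V(j) + A$ is Hermitian with eigenvalues $(\mu_k)$ and $\mathcal{N}(V(j)+A,(-\xi,\xi)) = \sum_k \mathbf{1}\{|\mu_k|<\xi\}$, one integration by parts gives
\[
\int_0^\infty \frac{2\eta\xi\,d\xi}{(\xi^2+\eta^2)^2}\,\mathcal{N}\bigl(V(j)+A,(-\xi,\xi)\bigr) \;=\; \sum_k \frac{\eta}{\mu_k^2+\eta^2} \;=\; \tr\operatorname{Im}\bigl(V(j)+A-i\eta I\bigr)^{-1}.
\]
Combining everything---taking $\tr\operatorname{Im}$ of the preceding display, summing over $j$, integrating $d\lambda/\pi$ over $I$, and sending $\eta\to 0^+$ via Stone's formula---produces the claimed representation. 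The main technical step is the contour argument: the positivity $\eta > 0$ together with $C \geq 0$ is what forces every generalized eigenvalue of the pencil into the open lower half-plane, and the only subtlety is the case where $C$ has a nontrivial kernel, which is handled cleanly because Hermiticity of $X$ rules out $Xv = i\eta v$ for $v\in\ker C$.
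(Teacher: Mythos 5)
Your proposal is correct and follows the same overall route as the paper: the Perron--Stieltjes/Stone formula, the Schur complement reduction of each diagonal resolvent block to $(V(j)+X(j)+iY(j)-i\eta)^{-1}$ with $Y(j)\le 0$, an averaging identity in $t$ with the Cauchy weight, and the layer-cake/integration-by-parts step converting $\tr\operatorname{Im}(V(j)+A-i\eta)^{-1}$ into the $\xi$-integral. The one genuine difference is how the averaging identity is established. The paper proves it only at the scalar level (Lemma~\ref{l:poisson}): it observes that $\xi\mapsto\operatorname{Im}\tr(X+\xi Y-i\eta)^{-1}$ is a positive, bounded harmonic function on the upper half-plane and invokes the Poisson representation theorem. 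You instead prove the full matrix-valued identity $(X-iD)^{-1}=\int\frac{dt}{\pi(1+t^2)}(X-tC-i\eta)^{-1}$ by residue calculus, locating all singular points of the pencil in the open lower half-plane via the positivity argument $\operatorname{Im}(t^*)\langle v,Cv\rangle=-\eta\|v\|^2$ and using the uniform bound $\|(X-tC-i\eta)^{-1}\|_{\operatorname{op}}\le\eta^{-1}$ on the closed upper half-plane to kill the arc. Your version is self-contained (no appeal to the representation theory of positive harmonic functions) and slightly stronger, since it gives the identity before taking traces and imaginary parts; the paper's version is shorter but only because it outsources the analytic content to the Poisson representation. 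Both are complete proofs.
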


The exact definition of the matrices $A(j,\lambda, \eta, t)$ is
given in \eqref{eq:A_j_lambda_eta_t_def_1} and
\eqref{eq:A_j_lambda_eta_t_def_2}.
\begin{rmk}\label{rmk:normal} Each of the integrals
\[ \int_I\frac{d\lambda}{|I|}~, \quad  \int_{-\infty}^\infty \frac{dt}{\pi(1+t^2)}~, \quad
\int_0^\infty  \frac{4\eta \xi^2\, d\xi}{\pi (\xi^2 + \eta^2)^2}\]
equals one. This leads us to introduce the following notation:
\[ \operatorname{Ave}_{\lambda,t,\xi}^{ \eta} \Phi(\lambda, t,\xi; \eta )
= \int_I\frac{d\lambda}{|I|} \int_{-\infty}^\infty \frac{dt}{\pi(1+t^2)} \int_0^\infty  \frac{4\eta \xi^2\, d\xi}{\pi (\xi^2 + \eta^2)^2}
\Phi(\lambda, t,\xi; \eta)~. \]
With this notation and assuming  $0 < |I| < \infty$, the conclusion of Proposition~\ref{l:formula} takes the form:
\begin{equation}\label{eq:formula} \frac{1}{|I|} \mathcal{N}(H, I) =   \lim_{\eta \to +0} \sum_{j=1}^k \operatorname{Ave}_{\lambda,t,\xi}^{\eta}\,
\frac{1}{2\xi} \mathcal{N}\Big(V(j) + A(j,\lambda,\eta,t), (-\xi, \xi)\Big)~.
\end{equation}
\end{rmk}

We prove Theorem~\ref{thm:weg} and Theorem~\ref{thm:min} using
Proposition~\ref{l:formula}, and defer the proof of the proposition
to the next section.

\begin{proof}[Proof of Theorem~\ref{thm:weg}]
In the setting of the theorem, the endpoints of any fixed interval
$I$ are almost surely not eigenvalues of $H$ (see
Remark~\ref{rmk:not_eigenvalue}). Therefore
Proposition~\ref{l:formula} is applicable  almost surely, and
(\ref{eq:formula}) yields:
\begin{equation*}\begin{split}
\frac{1}{|I|}\mathbb{E} \mathcal{N}(H, I) &= \mathbb{E}  \lim_{\eta \to +0} \sum_{j=1}^k \operatorname{Ave}_{\lambda,t,\xi}^{\eta}\,
\frac{1}{2\xi} \mathcal{N}\Big(V(j) + A(j,\lambda,\eta,t), (-\xi, \xi)\Big)\\
&
\leq \lim_{\eta \to +0} \sum_{j=1}^k \operatorname{Ave}_{\lambda,t,\xi}^{\eta}\,
\frac{1}{2\xi} \mathbb{E}  \mathcal{N}\Big(V(j) + A(j,\lambda,\eta,t), (-\xi, \xi)\Big)\\
&\leq
\lim_{\eta \to +0}\sum_{j=1}^k  \operatorname{Ave}_{\lambda,t,\xi}^{\eta}\,
C N_j  = C\sum_{j=1}^k N_j~.
\end{split}\end{equation*}
The first inequality follows from the Fatou lemma and the second one is an application of the single-block
bound (\ref{eq:dos}) with $|I| = 2\xi$.
\end{proof}

The proof of  Theorem~\ref{thm:min} also uses
formula \eqref{eq:formula} as a starting point, and proceeds following
arguments similar to those used in the proof of
\cite[Theorem~2]{APSSS} (which is the $k=1$ case of Theorem~\ref{thm:min}); these arguments are, in turn, inspired by
the work of Combes, Germinet and Klein \cite{CGK}.
We start with the following simple lemma (see e.g.\ \cite[Lemma 3.1 and (3.6)]{APSSS}  for a slightly stronger version featuring the Frobenius norm in place of the operator norm).
\begin{lemma}\label{l:qf1} Let $A$ be an $N\times N$ deterministic Hermitian matrix.
\vspace{-.2cm}
\begin{tabbing}
If either:  \= $v$ is uniformly distributed on the sphere $\mathbb{S}^{N-1}_{\mathbb{R}} = \{w\in\mathbb{R}^N\colon\|w\|=1\}$ \= and $A$ is
  real,\\
or: \> $v$ is uniformly distributed on the  sphere $\mathbb{S}^{N-1}_{\mathbb{C}} = \{w \in\mathbb{C}^N\colon\|w \|=1\}$,\> \\
\end{tabbing}
\vspace{-.7cm}
then
\begin{equation*}
  \mathbb{P} \left\{ \|A v\| \leq \frac{\epsilon}{\sqrt{N}}\|A\|_{\operatorname{op}}
\right\} \leq 5 \epsilon, \quad \epsilon>0.
\end{equation*}
\end{lemma}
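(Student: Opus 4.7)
The lemma reduces to a one-dimensional calculation via the invariance of the uniform measure on the sphere. Since $A$ is real symmetric (resp.\ Hermitian) and the distribution of $v$ is invariant under orthogonal (resp.\ unitary) transformations, I may assume without loss of generality that $A$ is diagonal, $A = \operatorname{diag}(\lambda_1, \dots, \lambda_N)$, with $|\lambda_1| = \|A\|_{\operatorname{op}}$. Then
\[
\|Av\|^2 = \sum_{i=1}^N |\lambda_i|^2 |v_i|^2 \;\geq\; |\lambda_1|^2 |v_1|^2 \;=\; \|A\|_{\operatorname{op}}^2 |v_1|^2,
\]
so the event in question is contained in the event $\{|v_1| \leq \epsilon/\sqrt{N}\}$. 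The problem therefore reduces to bounding the probability that a single coordinate of a uniformly random unit vector is small.

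In the complex case, $|v_1|^2$ follows the $\operatorname{Beta}(1, N-1)$ distribution, so $\mathbb{P}(|v_1|^2 \leq s) = 1 - (1-s)^{N-1} \leq (N-1)s$; plugging in $s = \epsilon^2/N$ gives a bound of $\epsilon^2$, which is smaller than $5\epsilon$ for $\epsilon \leq 1$ (and the lemma is trivial for $\epsilon > 1$ since probabilities are at most one, in which case $5\epsilon > 1$). In the real case, $v_1$ has density $c_N (1-v_1^2)^{(N-3)/2}$ on $[-1,1]$ with $c_N = \Gamma(N/2)/(\sqrt{\pi}\,\Gamma((N-1)/2))$; for $N \geq 3$ this density is maximized at the origin, and Wendel's inequality gives $c_N \leq \sqrt{N/(2\pi)}$, whence
\[
\mathbb{P}\bigl(|v_1| \leq \epsilon/\sqrt{N}\bigr) \;\leq\; 2\cdot\frac{\epsilon}{\sqrt{N}} \cdot c_N \;\leq\; \sqrt{\frac{2}{\pi}}\,\epsilon \;<\; \epsilon.
\]

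The only mildly delicate cases are $N=1$ and $N=2$ in the real setting, where the density ceases to be bounded on $[-1,1]$. For $N=1$ the statement is vacuous (or trivial, depending on interpretation, since $v = \pm 1$). For $N=2$, an explicit computation using $\mathbb{P}(|v_1| \leq t) = \frac{2}{\pi}\arcsin(t)$ with $t = \epsilon/\sqrt{2}$ and the bound $\arcsin(x) \leq \pi x / 2$ on $[0,1]$ yields a probability of at most $\epsilon/\sqrt{2} < \epsilon$; for $\epsilon > \sqrt{2}$ the trivial bound $1 \leq 5\epsilon$ suffices. Taking the largest of all constants obtained (with room to spare) yields the claimed factor of $5$.

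The only potential obstacle is keeping track of the constant $5$ uniformly in $N$ and in the two symmetry classes, which is handled by the elementary density estimates and Wendel-type bounds on ratios of Gamma functions recalled above; no deeper input is needed.
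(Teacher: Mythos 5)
Your proof is correct, and it is in fact more self-contained than what the paper itself offers: the paper does not prove Lemma~\ref{l:qf1}, but quotes it from \cite[Lemma 3.1 and (3.6)]{APSSS}, where a slightly stronger version with the Frobenius norm in place of $\|A\|_{\operatorname{op}}$ is established (stronger because $\|A\|_{\operatorname{op}}\le \|A\|_{\mathrm{F}}$, so the event only grows). Your route --- diagonalise $A$ using rotation invariance of $v$, bound $\|Av\|\ge \|A\|_{\operatorname{op}}\,|v_1|$, and estimate the marginal of a single coordinate of a uniform point on the sphere (the Beta law in the complex case; the density $c_N(1-t^2)^{(N-3)/2}$ together with Wendel's bound in the real case, with $N=1,2$ handled separately) --- is elementary and even yields the constant $1$ in place of $5$. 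What it buys less than the cited argument is precisely the Frobenius-norm strengthening: reducing to the top eigendirection discards all but one eigenvalue, so it cannot give anti-concentration of $\|Av\|$ relative to $\|A\|_{\mathrm{F}}$; that form would require a small-ball estimate for the full weighted sum $\sum_i \lambda_i^2 |v_i|^2$. This loss is immaterial here, since the paper only uses the operator-norm statement, via \eqref{eq:expectation_via_uniform_vector}. Two minor points worth recording: dividing by $\|A\|_{\operatorname{op}}$ in your containment step tacitly assumes $A\neq 0$ (for $A=0$ the stated bound is degenerate anyway, and in the application the lemma is used with $A$ an inverse, hence invertible), and your Beta computation in the complex case should be read for $N\ge 2$, the case $N=1$ being trivial in the same way as your real $N=1$ discussion.
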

Consequently, in the same setting, for any non-negative random variable $X$ which is independent of $v$,
\begin{equation}\label{eq:expectation_via_uniform_vector}
    \mathbb{E}\, X \le 2\mathbb{E}\left[X\cdot \mathbbm{1}\left\{\|A^{-1}v\|\ge \frac{\|A^{-1}\|_{\operatorname{op}}}{10\sqrt{N}}\right\}\right],
  \end{equation}
where $\mathbbm{1}\{\Omega\}$ is the indicator of an event $\Omega$.
\begin{proof}[Proof of Theorem~\ref{thm:min}] It suffices to prove the theorem for
intervals $I$ of length $0 < |I| < \infty$, therefore we tacitly impose this assumption
on all intervals which appear in this proof. The argument is by induction on $m$. Let $C_1 = 10 \, C$, where $C$ is the greater among
the constants in Theorem~\ref{thm:weg} and Proposition~\ref{prop:vec}. Fix an interval $I$
and the numbers $N_j$; let $m \geq 2$, and assume, as the induction hypothesis, that
\begin{equation}\label{eq:induction_hyp}
\mathbb{E} \prod_{\ell=0}^{m-2} (\mathcal{N}(H, I)-\ell) \leq \left(C_1
\sum_{j=1}^k N_j \, |I|\right)^{m-1}~,
\end{equation}
for any deformed block-Gaussian random matrix $H$ of the
form \eqref{eq:block_Gaussian_model} in either the orthogonal case
or the unitary case. Note that the induction base,
(\ref{eq:induction_hyp}) with $m=2$, follows from Theorem~\ref{thm:weg}.

Let $H$ be a random matrix of the form
\eqref{eq:block_Gaussian_model} in either the orthogonal case or the
unitary case. The
formula~\eqref{eq:formula} applied to $\mathcal{N}(H, I)$ shows that
\begin{equation*}
  \prod_{\ell=0}^{m-1} (\mathcal{N}(H, I)-\ell) = |I|\lim_{\eta \to +0} \sum_{j=1}^k \operatorname{Ave}_{\lambda,t,\xi}^{\eta}\,
\frac{1}{2\xi} \mathcal{N}\Big(V(j) + A(j,\lambda,\eta,t), (-\xi,
\xi)\Big)\prod_{\ell=1}^{m-1} (\mathcal{N}(H, I)-\ell)~.
\end{equation*}
Thus, by the Fatou lemma, it suffices to prove that for any
$1\le j\le k, \lambda,t\in\mathbb{R}$ and $\xi,\eta>0$,
\begin{equation}\label{eq:Minami_first_reduction}
  \mathbb{E}\, \mathcal{N}\Big(V(j) + A(j,\lambda,\eta,t), (-\xi, \xi)\Big)\prod_{\ell=1}^{m-1} (\mathcal{N}(H, I)-\ell)_{+} \le 2\xi\cdot C_1 N_j \left(C_1
\sum_{i=1}^k N_i \, |I|\right)^{m-1}.
\end{equation}
The eigenvalues of $V(j)+A(j,\lambda,\eta,t)$ are simple almost surely, since
the distribution of $V(j) + A(j,\lambda,\eta,t)$ is absolutely continuous with respect to
the Lebesgue measure on real symmetric matrices (orthogonal case)
or Hermitian matrices (unitary case).  For each natural $M$, construct a partition $\{ I_{M}^n\}_{n=1}^{2^M}$ of $(-\xi, \xi)$ into $2^M$ intervals of equal length. Then, almost surely,
  \begin{equation*}\begin{split}
    \mathcal{N}\Big(V(j) + A(j,\lambda,\eta,t), (-\xi, \xi)\Big)
     &= \lim_{M\to\infty} \sum_{n=1}^{2^M} \mathbbm{1}\left\{\mathcal{N}\Big(V(j) + A(j,\lambda,\eta,t),\, I_{M}^n\Big)\ge 1\right\} \\
&\hspace{-1cm}= \lim_{M\to\infty} \sum_{n=1}^{2^M} \mathbbm{1}\left\{\|(V(j) + A(j,\lambda,\eta,t) - \mathcal{M}(I_{M}^n))^{-1}\|_{\operatorname{op}}\ge \frac{2}{|I_{M}^n|}\right\}~,
\end{split}
  \end{equation*}
 where we denoted by $\mathcal{M}(J)$  the mid-point of an interval $J\subset\mathbb{R}$.
 This relation, combined with the monotone convergence theorem (as the partitions are refining when $M$ increases), reduces the
desired \eqref{eq:Minami_first_reduction} to the following claim: for any  interval $J$,
  \begin{equation}\label{eq:operator_norm_reduction}
    \mathbb{E}\, \mathbbm{1}\left\{\|B_{j,J}^{-1}\|_{\operatorname{op}}\ge \frac{2}{|J|}\right\}\prod_{\ell=1}^{m-1} (\mathcal{N}(H, I)-\ell)_{+} \le |J|\cdot C_1 N_j \left(C_1
\sum_{i=1}^k N_i \, |I|\right)^{m-1},
  \end{equation}
  where we denoted
  \begin{equation*}
    B_{j,J}:=V(j) + A(j,\lambda,\eta,t) - \mathcal{M}({J})~.
  \end{equation*}
  Now let $v$ be a random vector, independent of $H$, which is uniformly distributed on the sphere $\mathbb{S}^{N_j-1}_{\mathbb{R}}$ in the orthogonal case
  or uniformly distributed on the complex sphere $\mathbb{S}^{N_j-1}_{\mathbb{C}}$ in the unitary
  case. By first conditioning on $H$, inequality \eqref{eq:expectation_via_uniform_vector} may be applied to show that
  \begin{equation}\label{eq:random_vector_introduction}\begin{split}
  &\mathbb{E}\, \mathbbm{1}\left\{\|B_{j,J}^{-1}\|_{\operatorname{op}}\ge \frac{2}{|J|}\right\}\prod_{\ell=1}^{m-1} (\mathcal{N}(H, I)-\ell)_{+}\\
  &\le 2\mathbb{E}\left[\mathbbm{1}\left\{\|B_{j,J}^{-1}\|_{\operatorname{op}}\ge \frac{2}{|J|}\right\}\prod_{\ell=1}^{m-1} (\mathcal{N}(H, I)-\ell)_{+}\cdot \mathbbm{1}\left\{\|B_{j,J}^{-1}v\|\ge \frac{\|B_{j,J}^{-1}\|_{\operatorname{op}}}{10\sqrt{N_j}}\right\}\right]\\
  &\le 2\mathbb{E}\left[\mathbbm{1}\left\{\|B_{j,J}^{-1}v\|\ge \frac{1}{5\sqrt{N_j}|J|}\right\}\prod_{\ell=1}^{m-1} (\mathcal{N}(H, I)-\ell)_{+}\right].
  \end{split}\end{equation}
Denote by $P_j: \mathbb{R}^{\sum_i N_i} \to \mathbb{R}^{N_j}$ the coordinate projection
to the space corresponding to $V(j)$, and, for $\tau \in \mathbb{R}$, define the rank-one perturbation
\[ H_{v,\tau} = H + \tau P_j^* v v^* P_j~. \]
The eigenvalues of $H$ and $H_{v,\tau}$ interlace, therefore
\begin{equation}\label{eq:rk1}\prod_{\ell=1}^{m-1} (\mathcal{N}(H, I)-\ell)_{+} \le \mathfrak{P} := \lim_{\tau \to +\infty} \prod_{\ell=0}^{m-2} (\mathcal{N}(H_{v,\tau}, I)-\ell)\end{equation}
(the inequality actually holds for any fixed $\tau$).
In view of  \eqref{eq:random_vector_introduction}, our goal \eqref{eq:operator_norm_reduction} is reduced to the inequality
 \begin{equation}\label{eq:sufficient_bound_with_vec}    2\mathbb{E}\left[\mathbbm{1}\left\{\|B_{j,J}^{-1}v\|\ge \frac{1}{5\sqrt{N_j}|J|}\right\}\mathfrak{P}\right] \leq  |J|\cdot C_1 N_j \left(C_1
\sum_{i=1}^k N_i \, |I|\right)^{m-1}~,  \end{equation}
  which we now prove.

The following simple fact is central to the argument. For an
Hermitian matrix $K$ of dimension $r$ and unit vector $u \in
\mathbb{C}^r$, define a matrix $K_u$ of dimension $r-1$ by $K_u =
\hat{P}_{u^\perp} K_u \hat{P}_{u^\perp}^*$, where
$\hat{P}_{u^\perp}: \mathbb{C}^r \to \mathbb{C}^r / \mathbb{C}u$ is
the canonical projection (for example, if $u$ is the first vector of
the standard basis, $K_u$ is the submatrix obtained by removing the
first row and column of $K_u$). Then
\[  \lim_{\tau \to \infty} \mathcal{N}(K + \tau u u^*, I) = \mathcal{N}(K_u, I) \]
for any interval $I$ whose endpoints are not eigenvalues of $K_u$.
We apply this identity with $K = H$ and $u = P^*_j v$, and deduce
that the random variable $\lim\limits_{\tau \to +\infty}
\mathcal{N}(H_{v,\tau}, I)$ is measurable with respect to
$\hat{P}_{(P_j^* v)^\perp} H \hat{P}_{(P_j^* v)^\perp}^*$. Thus, the
``moreover'' part of Proposition~\ref{prop:vec} can be applied,
yielding
  \begin{equation}\label{eq:cond_vec_application}
  \begin{split}
&2\mathbb{E}\left[\mathbbm{1}\left\{\|B_{j,J}^{-1}v\|\ge \frac{1}{5\sqrt{N_j}|J|}\right\}\mathfrak{P} \right]\\
&=2\mathbb{E}
\left(\mathfrak{P}\cdot\mathbb{P}\left[\|B_{j,J}^{-1}v\|\ge
\frac{1}{5\sqrt{N_j}|J|} \,\,\Bigm|\,\, \hat{P}_{(P_j^* v)^\perp} H
\hat{P}_{(P_j^* v)^\perp}^* \right]\right) 
     \leq  C_1 N_j|J| \mathbb{E}\mathfrak{P}
     \end{split}
  \end{equation}
 with $C_1 = 10C$.
Now note that each of the matrices $H_{v,\tau}$, conditioned on $v$,
has the form \eqref{eq:block_Gaussian_model} in the orthogonal or
unitary case (corresponding to the case of $H$). Thus we may apply
the Fatou lemma and the induction hypothesis
\eqref{eq:induction_hyp} to conclude that
 \begin{equation}\label{eq:fromind}
\mathbb{E} \mathfrak{P} \leq \lim_{\tau \to \infty} \mathbb{E} \prod_{\ell=0}^{m-2} (\mathcal{N}(H_{v,\tau}, I)-\ell) \leq  \left(C_1
\sum_{i=1}^k N_i \, |I|\right)^{m-1}~.
\end{equation}
The combination of (\ref{eq:cond_vec_application}) with (\ref{eq:fromind}) concludes
the proof of (\ref{eq:sufficient_bound_with_vec}) and of the theorem.
\end{proof}

\begin{proof}[Proof of Theorem~\ref{thm:resolvent}] We consider in parallel the cases
of orthogonal and unitary symmetry. First, the matrix $H_L$ is of
the form
$$H_L \ = \ \left( \begin{array}{cccccccc}
V_1 & T_1 & 0 &  \cdots & \cdots & \cdots & 0\\
T_1^* & V_2 & T_2 & 0 & \cdots &\cdots & 0 \\
0 & T_2^* & V_3 & \ddots & \ddots && \vdots \\
\vdots & 0 & \ddots & \ddots &\ddots &\ddots & \vdots \\
\vdots & \vdots & \ddots &\ddots & \ddots &\ddots & 0 \\
\vdots & \vdots & &\ddots &\ddots & \ddots& T_{k-1}\\
0 & 0 & \cdots & \cdots & 0 & T_{k-1}^* &    V_k
\end{array}\right)
$$
with $V_j$, $j=1,\ldots,k$, being $W\times W$ matrices drawn from
the GOE (GUE) and $T_j$, $j=1,\ldots,k-1$, being lower triangular
real (complex) Gaussian matrices.  The individual matrices are
identically distributed within each family and stochastically
independent (within each family and between the families). The
matrix dimension is $2L+1=kW$. Therefore we are in the setting of
\cite[Section~3]{Sch1}.

For the rest of the proof, we fix an arbitrary $t \in (s, 1)$ (its value only affects the constants in
the estimates). According to \cite[Theorem 6]{Sch1}, there exist
$C, \mu>0$ such that for any $i,j\in\{-L,-L+1,\ldots,L\}$,
\begin{equation}\label{eq:est_j} \mathbb{E}\left ( \left | (H_L - \lambda)^{-1}(i,j)\right |^s \right ) \ \le \ C M(W,t)^\frac{s}{t}  e^{-\mu W^{-2\nu-1} |i-j|},\end{equation}
where
$$M(W,t) = \max_{-L\le i,j\le L} \mathbb{E} \left ( \left | (H_L - \lambda)^{-1}(i,j)\right |^t \right ),$$
and $$\nu \ \ge\ \max ( 2, \zeta + \max(a,1+\sigma+2b) )$$ with
$\zeta$, $a$, $\sigma$ and $b$ certain exponents related to the
distribution of the blocks of $H_L$. We refer to \cite{Sch1} for
the definition and discussion of the exponents $\zeta$, $a$ and $b$.
As explained in \cite[Section 5]{Sch1}, for the Gaussian matrices
considered here we can take $\zeta=2$, $a=0$ and $b=0$.

The key improvement afforded in the present work comes from the
exponent $\sigma$, which is related to the Wegner estimate, namely
$\sigma$ is such that for any $R > 1$, real symmetric (Hermitian) $W
\times W$ matrices $A, B$ and a real (complex) arbitrary $W \times
W$ matrix $D$,
$$\mathbb{P} \left \{\left \| (V-A)^{-1} \right \|>R W^{1+\sigma} \right \} \ \le \ \kappa \frac{1}{R}$$
and
$$\mathbb{P} \left \{\left \| \left( \begin{array}{cc}
V-A  & D \\ D^* & V'-B\end{array} \right)^{-1} \right \|>R
W^{1+\sigma} \right \} \ \le \ 2 \kappa \frac{1}{R}~,$$ where $V,
V'$ are independent $W\times W$ random matrices with the GOE (GUE)
distribution. Theorem~\ref{thm:weg}, applied with one or two
diagonal blocks ($k = 1,2$), ensures that these estimates hold with
$\sigma=0$ (in the single block case it suffices to use the result
of \cite{APSSS} stated here as Proposition~\ref{prop:vec}).
According to a Wegner-type estimate of \cite[Theorem II.1]{AM},
\[ M(W, t) \leq C_t W^{\frac{t}{2}}~. \]
Plugging this estimate into (\ref{eq:est_j}) with $\sigma=0$, we
obtain the claim.
\end{proof}

\section{Proof of Proposition \ref{l:formula} }

We start with a preparatory lemma which is a consequence of the
Poisson integral formula.
\begin{lemma}\label{l:poisson}
Let $X, Y$ be Hermitian matrices such that $Y$ is negative
semi-definite, and let $\eta > 0$. Then
\[ \Im \operatorname{tr} (X + iY - i\eta)^{-1}
= \int_{-\infty}^\infty \frac{dt}{\pi(1+t^2)}\Im \operatorname{tr}
(X + tY - i\eta)^{-1}~. \]
\end{lemma}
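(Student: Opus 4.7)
The plan is to apply the Poisson integral formula to the scalar holomorphic function $F(z) := \operatorname{tr}(X + zY - i\eta)^{-1}$, specialized to $z = i$. First I would verify that $F$ is holomorphic and globally bounded on the closed upper half-plane. For $z$ with $\Im z \ge 0$ and any nonzero vector $v$,
\[
\Im\langle v,\,(X+zY-i\eta)v\rangle \;=\; (\Im z)\,\langle v,Yv\rangle - \eta\|v\|^2 \;\le\; -\eta\|v\|^2,
\]
since $Y\le 0$ and $\eta>0$. Hence $X+zY-i\eta$ is invertible with $\|(X+zY-i\eta)^{-1}\|_{\operatorname{op}}\le 1/\eta$, and consequently $|F(z)|\le N/\eta$ (with $N$ the matrix dimension) on the closed upper half-plane; by continuity, invertibility persists on the strip $\{\Im z > -\eta/\|Y\|_{\operatorname{op}}\}$, so $F$ is holomorphic there.

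Next I would apply the residue theorem on the contour $\Gamma_R = [-R,R]\cup S_R$, with $S_R$ the upper semicircle of radius $R$, to the integrand $\bigl(\tfrac{1}{w-z_0}-\tfrac{1}{w-\bar z_0}\bigr)F(w)$. For $z_0=x_0+iy_0$ in the upper half-plane and $R>|z_0|$, only $z_0$ lies inside $\Gamma_R$, giving
\[
F(z_0) \;=\; \frac{1}{2\pi i}\oint_{\Gamma_R}\left(\frac{1}{w-z_0}-\frac{1}{w-\bar z_0}\right)F(w)\,dw.
\]
The combined kernel equals $\tfrac{2iy_0}{(w-z_0)(w-\bar z_0)}$, which is $O(|w|^{-2})$ on $S_R$; together with boundedness of $F$, the semicircle contribution is $O(1/R)$ and vanishes as $R\to\infty$, yielding the Poisson representation
\[
F(z_0) \;=\; \int_{-\infty}^{\infty}\frac{y_0/\pi}{(t-x_0)^2+y_0^2}\,F(t)\,dt.
\]
Setting $z_0=i$ reduces the kernel to $\tfrac{1}{\pi(1+t^2)}$, and taking imaginary parts of both sides gives the claimed identity.

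The only subtlety is the contour argument at infinity: a naive Cauchy formula with the single kernel $\tfrac{1}{w-z_0}$ would leave an $O(1)$ error on the semicircle, since $F$ is merely bounded rather than decaying. Passing to the Poisson kernel by subtracting the complementary kernel at $\bar z_0$ supplies the extra $|w|^{-1}$ decay that closes the argument, and removes any need for an approximation step or for $Y$ to be invertible.
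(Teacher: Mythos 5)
Your proof is correct, and while it starts from the same ingredients as the paper (the function $F(z)=\operatorname{tr}(X+zY-i\eta)^{-1}$, the quadratic-form estimate $\Im\langle v,(X+zY-i\eta)v\rangle\le-\eta\|v\|^2$ for $\Im z\ge 0$, and the resulting bound $\|(X+zY-i\eta)^{-1}\|_{\operatorname{op}}\le \eta^{-1}$), it justifies the Poisson representation differently. The paper notes that $\Im F$ is a \emph{positive} harmonic function on the upper half-plane with $\sup_{y>0}\Im F(iy)\le N\eta^{-1}$, and then invokes the classical representation theorem for such functions (citing Levin) to write $\Im F(i)$ as the Poisson integral of its boundary values. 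You instead exploit the stronger fact that $F$ itself is bounded by $N/\eta$ on the closed half-plane (and holomorphic on the slightly larger strip $\{\Im z>-\eta/\|Y\|_{\operatorname{op}}\}$, so continuity up to the real axis is automatic), and derive the Poisson formula directly from the residue theorem with the kernel $\frac{1}{w-z_0}-\frac{1}{w-\bar z_0}=\frac{2iy_0}{(w-z_0)(w-\bar z_0)}$, whose $O(|w|^{-2})$ decay kills the semicircle contribution. This makes the argument self-contained (no appeal to the Herglotz--Nevanlinna-type theorem, no need for positivity of $\Im F$, no discussion of boundary measures), at the cost of using the global boundedness of $F$ rather than only the positivity and the bound along the imaginary axis; since that boundedness is freely available here, your route is a perfectly valid and arguably more elementary alternative. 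Your remark about why the two-term kernel is needed (a single Cauchy kernel leaves an $O(1)$ semicircle error for a merely bounded $F$) is exactly the right point of care.
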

\begin{proof}
Consider the function
\[ \phi(\xi) = \mathrm{tr} (X + \xi Y - i\eta)^{-1}~, \quad \xi \in \mathbb{C}~, \quad \Im \xi \geq 0~. \]
Observe that, for $\xi$ as above and any non-zero vector $\psi \in \mathbb{C}^N$,
\[ \Im \langle (X + \xi Y - i\eta) \psi, \psi \rangle \le -\eta\|\psi\|^2< 0\]
(where  $N$ is the common dimension of $X$ and $Y$, and
$\langle \cdot,\cdot \rangle$ is the scalar product on $\mathbb{C}^N$). By an elementary
linear-algebreaic argument, $\phi$ is holomorphic in its domain of definition and, in particular,
$\Im \phi$ is harmonic. Also,
\[ \Im \phi(\xi) = \frac{1}{2i} \tr \left\{ (X + \xi Y - i\eta)^{-1} - (X + \bar{\xi} Y + i\eta)^{-1} \right\}   \]
is positive, and
\[ \limsup_{y \to + \infty} \Im \phi(iy) \leq \sup_{y > 0} \Im \phi(iy) \leq N\eta^{-1} < \infty~, \]
since $\| (X + iyY - i\eta)^{-1}\|_{\operatorname{op}} \leq
\frac{1}{\eta}$. Therefore (see e.g.\ \cite[Chapter 14]{Lev}) $\Im
\phi$ admits the Poisson representation
\[ \Im \phi(i) = \int_{-\infty}^{+\infty} \frac{\Im \phi(t)dt}{\pi(1+t^2)} =
\int_{-\infty}^{+\infty} \frac{dt}{\pi(1+t^2)}\Im \mathrm{tr}(X + t
Y - i\eta)^{-1}~.\qedhere\]
\end{proof}

We proceed with the proof of Proposition~\ref{l:formula}. Let $H$
have the form $H = H_0 + \oplus_{j=1}^k V(j)$ (as in
\eqref{eq:block_Gaussian_model}) in which $H_0$ and all $V(j)$ are
(deterministic) Hermitian matrices and suppose that $V(j)$ is of
size $N_j\times N_j$. Denote by $P_j: \mathbb{R}^{\sum_i N_i} \to
\mathbb{R}^{N_j}$ the coordinate projection to the space
corresponding to $V(j)$; also denote by $Q_j: \mathbb{R}^{\sum_i
N_i} \to \mathbb{R}^{\sum_{i \neq j} N_i}$ the coordinate projection
to the orthogonal subspace to the range of $P_j$. Let
\begin{equation}\label{eq:A_j_lambda_eta_t_def_1}
A(j) = P_j H_0 P_j^*~, \quad B(j) = Q_j H P_j^*~, \quad C(j) = Q_j H
Q_j^*~.
\end{equation}
(note that $A(j)$ is defined with $H_0$ rather than $H$) and
define, for $\lambda, t\in\mathbb{R}$ and $\eta>0$,
\begin{equation}\label{eq:A_j_lambda_eta_t_def_2}
A(j,\lambda,\eta,t) =  - \lambda +A(j) - B(j)^* (C(j) - \lambda +
t\eta) ((C(j) - \lambda)^2 +\eta^2)^{-1} B(j)~.
\end{equation}

The Perron--Stieltjes inversion formula \cite[Addenda to Chapter
III]{Akh}, using our assumption that the endpoints of $I$ are not
eigenvalues of $H$, implies that
\begin{equation}\label{eq:Perron_Stieltjes}
\mathcal{N}(H, I) = \lim_{\eta \to +0} \int_I \frac{d\lambda}{\pi}
\Im \operatorname{tr} (H-\lambda-i\eta)^{-1}~.
\end{equation}
Now, for any $\lambda\in\mathbb{R}$ and $\eta>0$ the integrand may
be rewritten using the Schur--Banachiewicz inversion formula,
\begin{equation*}
\begin{split}
\tr (H - \lambda-i\eta)^{-1} &= \sum_{j=1}^k  \tr P_j (H-\lambda-i\eta)^{-1} P_j^*\\
&= \sum_{j=1}^k \tr\Big(V(j) - \lambda-i\eta + A(j) - B(j)^* (C(j) -
\lambda-i\eta)^{-1} B(j)\Big)^{-1}~.\end{split}
\end{equation*}
This expression, in turn, may be rewritten as follows. Denoting, for
each $1\le j\le k$,
\begin{equation*}
  Z(j) = -\lambda + A(j) - B(j)^* (C(j) - \lambda -i\eta)^{-1}
  B(j)~,
\end{equation*}
we may define the Hermitian matrices
\begin{equation*}
\begin{split}
  &X(j) = \frac{Z(j)+Z(j)^*}{2} = - \lambda + A(j) - B(j)^* (C(j) - \lambda) ((C(j) - \lambda)^2 +\eta^2)^{-1}
  B(j)~,\\
  &Y(j) = \frac{Z (j)- Z(j)^*}{2i} = - \eta B(j)^* ((C(j) - \lambda)^2 +\eta^2)^{-1}B(j)~
\end{split}
\end{equation*}
and conclude that
\begin{equation*}
\tr (H - \lambda-i\eta)^{-1} = \sum_{j=1}^k \tr (V(j) + X(j) + iY(j)
- i\eta )^{-1}~.
\end{equation*}
The matrix $Y(j)$ is explicitly negative semi-definite,
therefore Lemma~\ref{l:poisson}  implies that
\begin{equation*}
\begin{split}
  \Im\tr (H - \lambda-i\eta)^{-1} &= \sum_{j=1}^k \int_{-\infty}^\infty \frac{dt}{\pi(1+t^2)}\Im \operatorname{tr} (V(j) + X(j) + tY(j) - i\eta)^{-1}\\
  &= \sum_{j=1}^k \int_{-\infty}^\infty \frac{dt}{\pi(1+t^2)}\Im \operatorname{tr} (V(j) + A(j,\lambda,\eta,t) - i\eta)^{-1}~.
\end{split}
\end{equation*}
Plugging this equality into \eqref{eq:Perron_Stieltjes} shows that
\begin{equation}\label{eq:number_of_eigenvalues_near_final_form}
  \mathcal{N}(H, I) = \lim_{\eta \to +0} \sum_{j=1}^k\int_I \frac{d\lambda}{\pi} \int_{-\infty}^\infty \frac{dt}{\pi(1+t^2)}\Im \operatorname{tr} (V(j) + A(j,\lambda,\eta,t) - i\eta)^{-1}~.
\end{equation}
To conclude the proof of the proposition, it remains to note that
\begin{equation*}
  \Im \operatorname{tr} (V(j) + A(j,\lambda,\eta,t) - i\eta)^{-1} = \int_0^\infty
d\mathcal{N}(V(j) + A(j,\lambda, \eta, t), (-\xi, \xi) )
\frac{\eta}{\xi^2 + \eta^2}~,
\end{equation*}
where the interior integral is a Stieltjes integral with respect to
the variable $\xi$, and, as before,  $\mathcal{N}(V(j) +
A(j,\lambda, \eta, t), (-\xi, \xi))$ is equal to the number of
eigenvalues of  $V(j) + A(j,\lambda, \eta, t)$ in the interval
$(-\xi,\xi)$. Integrating by parts, we obtain:
\[\int_0^\infty d\mathcal{N}(V(j)+ A(j,\lambda, \eta, t), (-\xi, \xi) ) \frac{\eta}{\xi^2 + \eta^2} =
\int_0^\infty \mathcal{N}\Big(V(j) +A(j,\lambda,\eta,t), (-\xi,
\xi)\Big) \frac{2\eta \xi \, d\xi}{(\xi^2 + \eta^2)^2}~.\] The last
two displayed equations together with
\eqref{eq:number_of_eigenvalues_near_final_form} establish the
proposition.

\section{Concluding remarks}\label{s:rmks}
\paragraph{Second-order perturbation theory}
One heuristic explanation for the scaling (\ref{eq:conj}) is provided
by second-order perturbation
theory. We sketch the argument for
$H^\text{bA}$; similar considerations apply  to the Wegner $N$-orbital operator $H^\text{Weg}$.

At  $g=0$, the coupling between blocks is completely suppressed; and  the operator  has pure point spectrum, with eigenvalues given by the union of the spectrum of the individual matrices $V(x)$, $x\in \mathbb{Z}^d$. Let $\lambda_j(x)$, $j=1,\ldots,N$, denote the eigenvalues associated to the matrix $V(x)$, with corresponding eigenvectors $\mathbf{v}_j(x)$, $j=1,\ldots,N$, in $\mathbb{C}^N$.
Then, for every $x$, the distribution of the eigenvalues
is approximately given by Wigner's semicircle density
$(2\pi)^{-1} \sqrt{(4-\lambda^2)_+}$,
and the gaps between the eigenvalues (for fixed $x$) are typically of order $N^{-1}$.

For positive $g = \frac{a}{\sqrt{N}}$, second-order perturbation theory
predicts that the eigenvalues $\lambda_j(x)$ shift by
a quantity close to
\[ \frac{a^2}{N} \sum_{y\sim x} \sum_{k=1}^N \frac{ \left | \left < \mathbf{v}_k(y),  \mathbf{v}_j(x) \right > \right |^2 } { \lambda_j(x) - \lambda_k(y)} \ \approx \
\frac{a^2d }{\pi N} \operatorname{P\!.\!V\!.}\!  \int_{-2}^{2} \frac{\sqrt{4-\lambda^2} }{\lambda_j(x) -\lambda} d \lambda  = \frac{a^2d \,\, \lambda_j(x)}{N}\ ,  \]
i.e.\ comparable to the mean gap.

Though the series in $a$ provided by Rayleigh--Schr\"odin\-ger (infinite order) perturbation theory has zero radius of convergence, the considerations of the previous paragraph provide an indication that the scaling $g = a/\sqrt{N}$ is natural.

\paragraph{Supersymmetric models}
Another perspective on the models (\ref{eq:def}), and random operators in general, is given by dual supersymmetric
models, which were introduced by Efetov \cite{Ef}, following earlier
work by Wegner and Sch\"afer \cite{W1,SW}; see further
the monograph of Wegner \cite{Wegbook} and the
mathematical review of  Spencer  \cite{Sp}.  In the supersymmetric
approach,  $\mathbb{E} |(H-z)^{-1}(x, y)|^2$ is expressed
as a two-point correlation in a dual supersymmetric model. Fixing $a>0$ and setting
$g = \frac{a}{\sqrt{N}}$, the  supersymmetric models dual
to (\ref{eq:def}) should  converge, as $N \to \infty$,
to a supersymmetric $\sigma$-model with $U(1, 1\mid 2)$ symmetry (in the unitary case) and $OSp(2,2 \mid 4)$ symmetry (in the orthogonal case), at  temperature
determined by $a$ and the value of the density of states. These $\sigma$-models are conjectured to exhibit a phase transition
in dimension $d \geq 3$. This provides additional support
for the scaling (\ref{eq:conj}).

As to rigorous results, a mathematical proof of the existence
of phase transition for the supersymmetric $\sigma$-models
remains a major challenge. Progress was made by
Disertori, Spencer, and Zirnbauer \cite{DSZ,DS},  who rigorously established  the existence of phase transition
for a supersymmetric $\sigma$-model with the simpler $OSp(2 \mid 2)$ symmetry.
Presumably, the analysis of the Efetov $\sigma$-models
 presents additional challenges.

The convergence of the dual supersymmetric models
to the corresponding $\sigma$-models is, to the best of our knowledge,
not yet mathematically established. Moreover, a strong form
of convergence is required to infer the existence of a phase transition
before the limit; convergence of the action
does not by itself suffice.

\paragraph{Lower bounds on the density of states}

The upper bound in Theorem~\ref{thm:weg} is often sharp up to a multiplicative
constant. One can obtain complementary bounds to Theorem~\ref{thm:weg} in
terms of the second moment
\begin{equation*}
s_2^2:=\frac{1}{\sum_{j=1}^k N_j} \mathbb{E} \tr H^2 =
\frac{1}{\sum_{j=1}^k N_j}\mathbb{E}\int \lambda^2
\mathcal{N}(H,d\lambda)~.
\end{equation*}
Namely, for any $0<t<s_2$ there exists an interval
$I\subset[-2 s_2,2 s_2]$ with $|I| = t$ for which
\begin{equation}\label{eq:lowerbd}
\mathbb{E} \mathcal{N}(H, I)  \geq \frac{1}{10 s_2} \sum_{j=1}^k N_j \,
|I|~,
\end{equation}
since the failure of this bound would imply that
\[ \mathbb{E} \mathcal{N}(H, [-2 s_2, 2 s_2]) \leq \left\lceil \frac{4s_2}{t} \right\rceil 
\frac{t }{10 s_2} \sum_{j=1}^k N_j \leq \frac12 \sum_{j=1}^k N_j
\]
in contradiction to Chebyshev's inequality.
In the applications to the orbital
models \eqref{eq:def} and to Gaussian band matrices (e.g.\
Definition~\ref{defin:band} with $\psi$ as in (\ref{eq:ekband}) and $\phi$
decaying sufficiently fast),
the quantity $s_2$ is itself bounded by a constant, whence
Theorem~\ref{thm:weg} is sharp in these cases.

It is plausible that, for orbital models and band matrices, bounds
of the type (\ref{eq:lowerbd}) also hold for individual intervals sufficiently close to the origin; see Wegner \cite{W3} for the case $N=1$.

\paragraph{Three open questions} We use this opportunity to recapitulate
a few of the open questions pertaining to the block Anderson and
Wegner orbital models (\ref{eq:def}).
\begin{enumerate}
\item Is it true that in  dimension $d \geq 3$  one has the following converse to 
Theorem~\ref{thm:loc}: for $g \geq C(d) N^{-1/2}$, there exist energies $\lambda$
at which  exponential decay of the form  (\ref{eq:expdecay}) does not hold, at least, for 
 large $N$? Absence of exponential decay for an interval of energies 
 could be considered as a signature of delocalisation.
\item Consider the case of fixed $g$ and $N \to \infty$. Is it true that the density of states, the density of the measure (\ref{eq:defdos}), converges,
as $N \to \infty$, in uniform metric? Convergence in the weak-$*$ metric (to an explicit
limiting measure) was proved  by Khorunzhiy and Pastur in \cite{KhP,P2}. To 
upgrade their result to uniform convergence, it would suffice (by a 
compactness argument) to show that the density of states is equicontinuous in $N$ as a function of the spectral parameter $\lambda$.
\item Is the density of states a smooth function
of the spectral parameter? It is expected to be analytic for all values of $g >0$ and 
$N \geq 1$. See further the results of \cite{CFGK,C} discussed after the proof of Corollary~\ref{cor:orbital_Wegner}, and the  work  \cite{CFS} and references therein pertaining to the Anderson model ($N=1$). 
\end{enumerate}

\paragraph{Acknowledgment.} Tom Spencer encouraged us
to look for a proof to the estimate (\ref{eq:band_matrix_Wegner}). We had the pleasant
opportunity to discuss various aspects of supersymmetric
$\sigma$-models with him as well as with Yan Fyodorov and Tanya Shcherbina.
The current project profited greatly from the helpful advice of Michael Aizenman,
and also crucially relies on the joint work \cite{APSSS}. We thank them
very much.

\end{document}